\theoremstyle{plain}\newtheorem{theorem}{Theorem}
\newtheorem{corollary}{Corollary}
\newtheorem{lemma}{Lemma}
\newtheorem{example}{Example}
\theoremstyle{definition}\newtheorem{definition}{Definition}
\newcommand{\MaxPalE}{\mathit{MaxPalEnd}}
\newcommand{\lca}{\mathsf{lca}}
\newcommand{\LPS}{\mathsf{LPS}}
\newcommand{\rev}[1]{{#1}^{\mathit{R}}}
\newcommand{\parent}{\mathsf{parent}}
\newcommand{\str}{\mathsf{str}}
\newcommand{\EERTREE}{\mathsf{EERTREE}}
\newcommand{\slink}{\mathsf{slink}}
\newcommand{\qlink}{\mathsf{qlink}}
\newcommand{\dlink}{\mathsf{dlink}}
\newcommand{\Color}{\mathsf{color}}
\newcommand{\Uncolor}{\mathsf{uncolor}}
\newcommand{\Insert}{\mathsf{insert}}
\newcommand{\Delete}{\mathsf{delete}}
\newcommand{\Pred}{\mathsf{pred}}
\newcommand{\Succ}{\mathsf{succ}}
\newcommand{\InsertLeaf}{\mathsf{insert\_leaf}}
\newcommand{\DeleteLeaf}{\mathsf{delete\_leaf}}
\newcommand{\NCA}{\mathsf{NCA}}
\newcommand{\Leaves}{{L}}
\newcommand{\timeP}{\mathop{\mathsf{P}_{\mathsf{all}}}}
\newcommand{\timePincr}{\mathop{\mathsf{P}_{\mathsf{incr}}}}
\newcommand{\timeCP}{\mathop{\mathsf{CP}_{\mathsf{all}}}}
\newcommand{\timeCPincr}{\mathop{\mathsf{CP}_{\mathsf{incr}}}}
\begin{document}

\title{Online Computation of Palindromes and Suffix Trees on Tries}

\author[1]{Hiroki Shibata\thanks{\texttt{shibata.hiroki.753@s.kyushu-u.ac.jp}}}
\author[2]{Mitsuru Funakoshi\thanks{\texttt{mitsuru.funakoshi0000@gmail.com}}}
\author[3]{Takuya Mieno\thanks{\texttt{tmieno@uec.ac.jp}}}
\author[2]{Masakazu Ishihata\thanks{\texttt{masakazu.ishihata@ntt.com}}}
\author[4]{Yuto Nakashima\thanks{\texttt{nakashima.yuto.003@m.kyushu-u.ac.jp}}}
\author[4]{Shunsuke Inenaga\thanks{\texttt{inenaga.shunsuke.380@m.kyushu-u.ac.jp}}}
\author[5]{Hideo Bannai\thanks{\texttt{hdbn.dsc@tmd.ac.jp}}}
\author[4]{Masayuki Takeda}

\affil[1]{Joint Graduate School of Mathematics for Innovation, Kyushu University, Japan}
\affil[2]{NTT Communication Science Laboratories, Kyoto, Japan}
\affil[3]{Department of Computer and Network Engineering, University of Electro-Communications, Japan}
\affil[4]{Department of Informatics, Kyushu University, Japan}
\affil[5]{M\&D Data Science Center, Institute of Integrated Research, Institute of Science Tokyo, Japan}

\maketitle

\abstract{
  We consider the problems of computing maximal palindromes and distinct palindromes
  in a trie.
  A trie is a natural generalization of a string, which can be seen as a single-path tree.
  There is a linear-time offline algorithm to compute
  maximal palindromes and distinct palindromes in a given (static) trie
  whose edge-labels are drawn from a linearly-sortable alphabet [Mieno et al., ISAAC 2022].
  In this paper, we tackle problems of palindrome enumeration on dynamic tries
  which support leaf additions and leaf deletions.
  We propose the first sub-quadratic algorithms to enumerate palindromes in a dynamic trie.
  For maximal palindromes, we propose an algorithm that runs in $O(N \min(\log h, \sigma))$ time
  and uses $O(N)$ space, where
  $N$ is the maximum number of edges in the trie,
  $\sigma$ is the size of the alphabet,
  and $h$ is the height of the trie.
  For distinct palindromes, we develop several online algorithms based on different algorithmic frameworks,
  including approaches using the EERTREE (a.k.a. palindromic tree) and the suffix tree of a trie.
  These algorithms support leaf insertions and deletions in the trie
  and achieve different time and space trade-offs.
  Furthermore, as a by-product,
  we present online algorithms to construct the suffix tree and the EERTREE of the input trie,
  which is of independent interest.
}

\section{Introduction}\label{sec:intro}

\emph{Palindromes} are strings that read the same forward and backward.
Finding palindromic structures in a given string is a fundamental task in string processing and has been studied extensively (e.g., see~\cite{Manacher75, Apostolico1995parallel, MatsubaraIISNH09, DBLP:journals/ipl/GroultPR10, DBLP:conf/stringology/KosolobovRS13, Porto2002ApproxPalindrome, KolpakovK09, NarisadaDNIS20, GawrychowskiIIK18} and references therein).

Consider the set $C_n = \{1,1.5,2, \ldots, n\}$ of
$2n-1$ half-integer and integer positions in a string $T$ of length $n$.
The \emph{maximal palindrome} for a position $c \in C_n$ in $T$
is a non-extensible palindrome whose center lies at $c$.
It is easy to store all maximal palindromes with $O(n)$ total space;
e.g., simply store their lengths in an array of length $2n-1$
together with the input string $T$.
If $P = T[i..j]$ is a maximal palindrome with center $c = \frac{i+j}{2}$,
then clearly all substrings $P' = T[i+d..j-d]$ with $0 \leq d \leq \frac{j-i}{2}$
are also palindromes.
Hence, by computing and storing all maximal palindromes in $T$,
we can obtain a compact representation of all palindromes in $T$.
Manacher~\cite{Manacher75} gave
an elegant $O(n)$-time algorithm to compute all maximal palindromes in $T$.
This algorithm works over a general unordered alphabet.
For the case where the input string is drawn from a
linearly sortable alphabet~\cite{EllertGG23},
such as an integer alphabet of size polynomial in $n$,
there exists an alternative suffix-tree-based algorithm~\cite{Weiner73}
that runs in $O(n)$ time~\cite{gusfield97:_algor_strin_trees_sequen}.
The length of the maximal palindrome centered at position $c$
can be determined by computing the longest common prefix (LCP)
of $T[\lfloor c \rfloor .. n]$ and $\rev{T[\lceil c \rceil .. 1]}$.
In this approach, a suffix tree of $T \# \rev{T} \$$ is constructed,
where $\rev{T}$ denotes the reversed string of $T$,
and $\#$ and $\$$ are special symbols that do not appear in $T$.
By augmenting the suffix tree with a lowest common ancestor (LCA)
data structure~\cite{Farach-ColtonFM00},
the LCP length can be computed in $O(1)$ time
after an $O(n)$-time preprocessing step.
Thus, for each center, the maximal palindrome can be obtained in constant time,
and the computation over all centers can be done in linear time.

Another central notion regarding palindromic substrings is
\emph{distinct palindromes}.
Droubay et al.~\cite{DBLP:journals/tcs/DroubayJP01} showed that any string of length $n$
contains at most $n+1$ distinct palindromes
(including the empty string).
Groult et al.~\cite{DBLP:journals/ipl/GroultPR10}
proposed an $O(n)$-time algorithm for computing all distinct palindromes in a string
of length $n$ over a linearly-sortable alphabet.
Moreover, all distinct palindromes of a string can be represented compactly by the
\emph{EERTREE} (a.k.a.~palindromic tree)~\cite{EERTREE}.

In this paper, instead of strings, we consider
a \emph{trie} $\mathcal{T}$ as input,
where each edge is labeled by a single character from the alphabet $\Sigma$
and the outgoing edges of each node are labeled by mutually distinct characters.
A trie is a natural extension of a string,
and is a compact representation of a set of strings.
There are a number of works on efficient algorithms on tries,
such as indexing a (backward) trie~\cite{Breslauer1998,Kosaraju89a,Shibuya03,inenaga01:_const_cdawg_trie,MohriMW09,ferragina09:_compr,NakashimaIIBT15,Inenaga20} for exact pattern matching,
parameterized pattern matching on a trie~\cite{AmirN09,FujisatoNIBT19},
order preserving pattern matching on a trie~\cite{NakamuraIBT17},
and finding all maximal repetitions (a.k.a. runs) in a trie~\cite{SugaharaNIBT21}.
Recently,
a linear-time algorithm to compute maximal and distinct palindromes in a trie
was proposed in~\cite{MienoFI22}.
We emphasize that the algorithm of~\cite{MienoFI22} works \emph{offline},
and that its time complexity depends on the assumption of a linearly-sortable alphabet,
while our algorithm described below works in an \emph{online} manner
and over a general ordered alphabet.

This paper tackles the problems of computing
all maximal palindromes and all distinct palindromes in a trie $\mathcal{T}$
given in an \emph{online} manner,
i.e., the trie grows step-by-step by inserting one leaf per one step.
Na\"ive methods for solving these problems would be to apply
Manacher's algorithm~\cite{Manacher75} or
Groult et al.'s algorithm~\cite{DBLP:journals/ipl/GroultPR10}
for each root-to-leaf path string in $\mathcal{T}$,
but this requires $\Omega(N^2)$ time in the worst case
since there exists a trie with $N$ edges that
can represent $\Theta(N)$ strings of length $\Theta(N)$ each.
We also remark that a direct application of Manacher's algorithm to a trie
does not seem to solve our problem efficiently,
since the amortization argument in the case of a single string
does not hold in our case of a trie.
The aforementioned suffix tree approach of Gusfield~\cite{gusfield97:_algor_strin_trees_sequen}
cannot be applied to our trie case either,
even for offline algorithm design:
While the number of suffixes in the reversed leaf-to-root direction
of the trie $\mathcal{T}$ is $N$,
the number of suffixes in the forward root-to-leaf direction can be $\Theta(N^2)$
in the worst case.
Thus, one cannot afford to construct a suffix tree
that contains all suffixes of the forward paths of $\mathcal{T}$.

In this paper, we first show that
the number of maximal palindromes in a trie $\mathcal{T}$ with $N$ edges and $L$ leaves
is exactly $2N-L$ and that
the number of distinct palindromes in $\mathcal{T}$ is at most $N+1$.
These results generalize the known bounds for a single string.
Then, we present an online algorithm to compute all maximal palindromes
that runs in $O(N \min(\log h, \sigma))$ time and $O(N)$ space
by exploiting the periodicity of palindromes,
where $h$ is the height of the input trie $\mathcal{T}$.
Our algorithm should be compared with existing results of~\cite{MienoFI22} for offline setting.
Their algorithm first constructs the suffix tree of a backward trie~\cite{Kosaraju89a},
and thus, 
the total time complexity increases to 
$O(N\log\sigma)$ time for a general ordered alphabet or
$O(N\sigma)$ time for a general unordered alphabet
due to the sorting complexity~\cite{Farach-ColtonFM00}.
Since $h$ is incomparable with $\sigma$ in general,
$O(N \min(\log h, \sigma))$ is incomparable with $O(N\log\sigma)$.
On the other hand, for a general unordered alphabet,
our algorithm never performs worse than $O(N\sigma)$.

Furthermore, we present several online algorithms for computing all distinct palindromes in a trie.
These algorithms fall into two main frameworks.
The first framework is based on the online construction of the suffix tree of a backward trie,
while the second framework is based on the online construction of the EERTREE of a forward trie.
Within the EERTREE-based framework, we describe multiple variants that differ
in the way palindromic suffixes are traversed and maintained.
All proposed algorithms operate over a general ordered alphabet.
However, some of the algorithms admit improved time bounds
under specific condition on the alphabet.
The online construction techniques developed in this work,
including both the suffix-tree-based and the EERTREE-based approaches,
may be of independent interest.
The algorithms and their time and space complexities are summarized in Table~\ref{tab:online_algorithms}.

\begin{table}[tb]
  \centering
  \caption{
    A summary of the online algorithms for computing distinct palindromes in a trie
    presented in this paper.
    We denote 
    by $N$ the number of edges in the trie, 
    by $D$ the number of distinct palindromes in the trie,
    by $h$ its height, and 
    by $\sigma$ the alphabet size.
    The term $\timePincr(\sigma) \in O(\log \sigma)$ denotes the query time of a linear-space incremental predecessor data structure
    over a universe of size $\sigma$.
    The term $\timeCPincr(N, \sigma) \in O(\log N)$ denotes the query time of a linear-space incremental colored predecessor data structure
    with $\sigma$ colors and $N$ total elements, as described in Lemma~\ref{lem:colored_pred}.
    In general, $\timeCPincr(N, \sigma)$ is not directly comparable to $\log h$ or $\log \sigma$.
    By employing a colored predecessor data structure that supports deletions,
    each algorithm can be extended to handle leaf deletions.
  }
  \vspace{\baselineskip}
  \label{tab:online_algorithms}
  \begin{tabular}{@{}llcc@{}}
    \toprule
    \textbf{Framework} & \textbf{Method} & \textbf{Time Complexity} & \textbf{Space Complexity} \\
    \midrule
    \multirow{8}{*}{\shortstack[l]{EERTREE\\(Forward Trie)}}
& \begin{tabular}[c]{@{}l@{}}
          Quick Link \\
          (Cor.~\ref{cor:eertree_online_qlink})
        \end{tabular}
      & $O(N \min(\log h, \sigma) + D \timePincr(\sigma))$
      & $O(N)$ \\
      \cmidrule(l){2-4}
      
& \begin{tabular}[c]{@{}l@{}}
          Direct Link with \\
          Persistent Tree \\
          (Cor.~\ref{cor:eertree_online_dlink_persistent})
        \end{tabular}
      & $O(N \log \min(\log h, \sigma) + D \timePincr(\sigma))$
      & $O(N + D \log \min(\log h, \sigma))$ \\
      \cmidrule(l){2-4}
      
& \begin{tabular}[c]{@{}l@{}}
          Direct Link with \\
          Colored Ancestor \\
          (Cor.~\ref{cor:eertree_online_dlink_nca})
        \end{tabular}
      & $O(N (\timeCPincr(D, \sigma)))$ 
      & $O(N)$ \\
    \midrule
    
    \multirow{2}{*}[4pt]{\shortstack[l]{Suffix Tree\\(Backward Trie)}} 
& \begin{tabular}[c]{@{}l@{}}
          Suffix Tree \\
          (Cor.~\ref{cor:dist_pal})
        \end{tabular}
      & $O(N (\timeCPincr(N, \sigma) + \min(\log h, \sigma)))$ 
      & $O(N)$ \\
\bottomrule
  \end{tabular}
\end{table}

\paragraph*{\bf Related work.}
There are a few combinatorial results for palindromes in an \emph{unrooted} edge-labeled tree.
Brlek et al.~\cite{DBLP:conf/dlt/BrlekLP15} showed
an $\Omega(M^{3/2})$ lower bound on the maximum number of distinct palindromes in an unrooted tree with $M$ edges.
Later Gawrychowski et al.~\cite{dpalforunrootedtree} showed
a matching upper bound $O(M^{3/2})$ on the maximum number of distinct palindromes in an unrooted tree with $M$ edges.
Also, they showed an $O(M^{3/2} \log M)$-time algorithm for reporting all distinct palindromes in an unrooted tree.
Note that these previous studies consider an unrooted tree.
Also, in addition to palindrome computation,
algorithms for computing some string regularities in rooted and unrooted trees are proposed~\cite{SugaharaNIBT21,DBLP:conf/spire/FiciG19,DBLP:conf/spire/RadoszewskiRSWZ21}.

The problem of constructing the suffix tree of a trie has been studied in several works.  
Kosaraju~\cite{Kosaraju89a} proposed a breadth-first construction method running in $O(N \log N)$ time,  
although the proof details were omitted and it is unclear whether the method can be extended to the online setting.  
Breslauer~\cite{Breslauer1998} presented an $O(N\sigma)$-time and space algorithm (in the proof of Theorem~2.5)  
and also proposed an $O(N \log \sigma)$-time and space algorithm obtained by encoding each character into a binary string of length $\lceil \log_2 \sigma \rceil$ and constructing the suffix tree of the binary-encoded trie.  
Although not explicitly stated, Breslauer's algorithms can be adapted for online construction.  

Our method follows a different approach, achieving $O(N)$ space while supporting online construction.  
These tasks are closely related to the problem of \emph{real-time construction},  
since online construction of index structures on tries may invalidate the amortized analyses that are valid for single-string structures.  
Real-time algorithms for constructing the suffix tree of a single string have been developed in previous work~\cite{FischerG15_AlphabetDependent,KucherovN17_FullFledged},  
and our construction algorithm generalizes these results to the trie setting.

\paragraph*{\bf Paper Organization.}
The rest of this paper is organized as follows.
Section~\ref{sec:preliminaries} provides definitions and basic properties of palindromes in a string and a trie.
Section~\ref{sec:online_maximal} introduces our algorithm for computing maximal palindromes in a trie, which utilizes periodicity of palindromes.
Section~\ref{sec:online_distinct} presents online algorithms for computing distinct palindromes in a trie.
The section describes EERTREE-based algorithms as well as a suffix-tree-based algorithm.
Finally, Section~\ref{sec:conclusions} concludes the paper.

A part of results of this article appeared in a preliminary version~\cite{FunakoshiPSC2019}.
The current article contains algorithms for computing distinct palindromes and
suffix trees/EERTREEs for a trie under leaf insertions and deletions as new materials,
which can be found in Section~\ref{sec:online_distinct}.
 \section{Preliminaries}\label{sec:preliminaries}

\subsection{Strings}

Let $\Sigma$ be the {\em alphabet}.
An element of $\Sigma$ is called a {\em character}.
An element of $\Sigma^*$ is called a {\em string}.
The length of a string $T$ is denoted by $|T|$.
The empty string $\varepsilon$ is a string of length 0,
namely, $|\varepsilon| = 0$.
For a string $T = xyz$, $x$, $y$ and $z$ are called
a \emph{prefix}, \emph{substring}, and \emph{suffix} of $T$, respectively.
For two strings $X$ and $Y$,
For a string $T$ and an integer $i$ with $1 \leq i \leq |T|$,
$T[i]$ denotes the $i$th character of $T$,
and for two integers $i,j$ with $1 \leq i \leq j \leq |T|$,
$T[i..j]$ denotes the substring of $T$
that begins at position $i$ and ends at position $j$.
For convenience, let $T[i..j] = \varepsilon$ when $i > j$.
An integer $p \geq 1$ is said to be a \emph{period}
of a string $T$ iff $T[i] = T[i+p]$ for all $1 \leq i \leq |T|-p$.

Let $\rev{T}$ denote the reversed string of $T$,
i.e., $\rev{T} = T[|T|] \cdots T[1]$.
A string $T$ is a \emph{palindrome} if $T = \rev{T}$.
We remark that the empty string $\varepsilon$ is also
considered to be a palindrome.
For any non-empty palindromic substring $T[i..j]$ in $T$,
the rational number $\frac{i+j}{2}$ is called its \emph{center}.
A non-empty palindromic substring $T[i..j]$
is said to be a \emph{maximal palindrome} centered at $\frac{i+j}{2}$ in $T$
if $T[i-1] \neq T[j+1]$, $i = 1$, or $j = |T|$.
It is clear that for each center $c = 1, 1.5, \ldots, n-0.5, n$,
we can identify the maximal palindrome $T[i..j]$ whose center is $c$
(namely, $c = \frac{i+j}{2}$).
Thus, there are exactly $2n-1$ maximal palindromes in a string of length $n$.
In particular, maximal palindromes $T[1..i]$ and $T[i..|T|]$ for $1 \leq i \leq n$
are respectively called a \emph{palindromic prefix} and a \emph{palindromic suffix} of $T$.

Manacher~\cite{Manacher75} showed an elegant online algorithm
which computes all maximal palindromes of a given string $T$ of length $n$
in $O(n)$ time.
An alternative offline approach is to use outward longest common extension (LCE) queries
for $2n-1$ pairs of positions in $T$.
Using the suffix tree~\cite{Weiner73} for string $T\$\rev{T}\#$
enhanced with a lowest common ancestor data structure~\cite{HarelT84,SchieberV88,DBLP:conf/latin/BenderF00},
where $\$$ and $\#$ are special characters which do not appear in $T$,
each outward LCE query can be answered in $O(1)$ time.
For any linearly-sortable alphabet,
preprocessing for this approach takes $O(n)$ time and space~\cite{Farach-ColtonFM00,gusfield97:_algor_strin_trees_sequen}.

Let $T$ be a string of length $n$.
For each $1 \leq i \leq n$, let $\MaxPalE_T(i)$ denote the set of
maximal palindromes of $T$ that end at position $i$.
Let $\mathbf{S}_i = s_1, \ldots, s_{g}$ be the sequence of
lengths of maximal palindromes in $\MaxPalE_T(i)$ sorted in increasing order,
where $g = |\MaxPalE_T(i)|$.
Let $d_j$ be the progression difference for $s_j$,
i.e., $d_j = s_{j} - s_{j-1}$ for $2 \leq j \leq g$.
In particular, let $d_1 = s_1 - |\varepsilon| = s_1$.
We use the following lemma which is based on
periodic properties of maximal palindromes
ending at the same position.

\begin{lemma}[Lemma 2 of \cite{DBLP:journals/tcs/FunakoshiNIBT21}]
  \label{lem:maximal_palindromes}
  \hfill
  \begin{enumerate}
    \item[(i)] For any $1 \leq j < g$, $d_{j+1} \geq d_{j}$.
    \item[(ii)] For any $1 < j < g$, if $d_{j+1} \neq d_{j}$, then $d_{j+1} \geq d_j + d_{j-1}$.
    \item[(iii)] $\mathbf{S}_i$ can be represented by $O(\log i)$ arithmetic progressions,
      where each arithmetic progression is a tuple $\langle s, d, t \rangle$ representing the sequence $s, s+d, \ldots, s + (t-1)d$ with common difference $d$.
    \item[(iv)] If $t \geq 2$, then the common difference $d$ is a period of every maximal palindrome
      which ends at position $i$ in $T$ and whose length belongs to the arithmetic progression $\langle s, d, t \rangle$.
  \end{enumerate}
\end{lemma}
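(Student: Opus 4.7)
The plan is to derive all four parts from a single observation: if $Q$ is a palindromic suffix of a palindrome $P$ ending at the same position, then $Q$ is also a prefix of $P$ (by the palindrome symmetry of $P$, the suffix of length $|Q|$ equals the reverse of the prefix of length $|Q|$, which equals the prefix itself since $Q$ is a palindrome). Hence $Q$ is a border of $P$, so $|P|-|Q|$ is a period of $P$. A standard consequence is that if $P$ is a palindrome with period $p < |P|$, then the suffix of $P$ of length $|P|-p$ is itself a palindrome, since it equals the prefix of $P$ of length $|P|-p$ by the period, and prefixes of palindromes of length $|P|-p$ are themselves palindromes when reversed by the symmetry of $P$.

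For (i), let $P_k$ denote the palindromic suffix of $T$ ending at $i$ of length $s_k$. Since $P_j$ is a palindromic suffix of $P_{j+1}$, the quantity $d_{j+1}=s_{j+1}-s_j$ is a period of $P_{j+1}$, and the consequence above produces a palindromic suffix of $P_{j+1}$ of length $s_{j+1}-2d_{j+1}=s_j-d_{j+1}$ whenever this is positive. This shorter palindromic suffix must appear in $\mathbf{S}_i$ and, being strictly less than $s_j$, is at most $s_{j-1}$, which rearranges to $d_j\leq d_{j+1}$; if instead $s_j-d_{j+1}\leq 0$ then $d_{j+1}\geq s_j>d_j$ already. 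For (ii), under the hypothesis $d_{j+1}\neq d_j$ we have $d_{j+1}>d_j$ strictly by (i), so $s_j-d_{j+1}<s_{j-1}$, which means the induced palindromic suffix of $P_{j+1}$ is at most $s_{j-2}$ (taking $s_0=0$ when $j=2$). Rearranging $s_{j-2}\geq s_j-d_{j+1}$ yields exactly $d_{j+1}\geq d_j+d_{j-1}$.

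For (iii), (i) and (ii) together force the distinct values $v_1<v_2<\cdots<v_m$ actually attained by the sequence $(d_j)_j$ to satisfy $v_{k+1}\geq v_k+v_{k-1}$ for $k\geq 2$ (at each jump, $d_{j+1}\geq d_j+d_{j-1}$ with $d_j=v_k$ and $d_{j-1}\geq v_{k-1}$). Hence the $v_k$ grow at least as fast as the Fibonacci numbers, and combined with $v_m\leq s_g\leq i$ this gives $m=O(\log i)$; each maximal run of constant $d_j$ corresponds to exactly one arithmetic progression, giving $O(\log i)$ progressions in total. For (iv), within a progression $\langle s,d,t\rangle$ with $t\geq 2$, the palindrome of length $s+kd$ has the palindrome of length $s+(k-1)d$ as a palindromic suffix, so the border argument immediately yields $d$ as a period of the longer palindrome; applying this to every consecutive pair in the progression shows that $d$ is a period of every palindrome in the progression.

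The main obstacle I anticipate is careful bookkeeping of boundary cases: the induced palindromic suffix of length $s_j-d_{j+1}$ may be empty or formally undefined when $s_j\leq d_{j+1}$, and the indexing in (ii) requires that $d_{j-1}$ and $s_{j-2}$ exist. I would smooth these over by adopting the conventions $s_0=0$ and treating the empty string as a palindromic suffix of every palindrome of $T$. A secondary subtlety for (iii) is that the first jump from $v_1$ to $v_2$ may only yield the weaker bound $v_2\geq v_1+1$, but this only shifts the base of the Fibonacci-type induction and does not affect the $O(\log i)$ bound.
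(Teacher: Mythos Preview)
The paper does not supply a proof of this lemma; it is quoted as Lemma~2 of~\cite{DBLP:journals/tcs/FunakoshiNIBT21} and used as a black box. So there is no in-paper argument to compare against, and your outline is essentially the standard route to this result.

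That said, there is one genuine gap in your argument for (i) and (ii). Having produced a palindrome of length $s_j-d_{j+1}$ ending at position $i$, you assert that this length ``must appear in $\mathbf{S}_i$''. But $\mathbf{S}_i$ records only \emph{maximal} palindromes ending at $i$, not arbitrary palindromic suffixes of $T[1..i]$; a palindrome $T[c..i]$ with $c>1$ and $i<|T|$ fails to be maximal whenever $T[c-1]=T[i+1]$. You need to check maximality of the induced palindrome. This does hold: writing $P_{j+1}=T[a..i]$ and $P_j=T[b..i]$ (so $d_{j+1}=b-a$), the induced palindrome is $T[c..i]$ with $c=2b-a$; since $b-a$ is a period of $T[a..i]$ and $c-1-(b-a)=b-1\in[a,i]$, we get $T[c-1]=T[b-1]$, and $T[b-1]\neq T[i+1]$ by maximality of $P_j$. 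Without this step your argument only establishes the lemma for the special case $i=|T|$ (palindromic suffixes), which is the older result of~\cite{Apostolico1995parallel,MatsubaraIISNH09} that the paper mentions separately.

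There is also a small omission in (iv): your consecutive-pair argument shows that $d$ is a period of the palindromes of lengths $s+d,\ldots,s+(t-1)d$, but says nothing about the shortest one of length $s$. To cover it, note that by the way the groups are formed (each $s_k$ is placed in the group with common difference $d_k$), the element of $\mathbf{S}_i$ immediately preceding $s$ has length exactly $s-d$ (taking it to be $0$ when $s=s_1$, since $d_1=s_1$), so the same border argument gives $d$ as a period of the length-$s$ palindrome as well.
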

Each arithmetic progression $\langle s, d, t \rangle$ is called
a \emph{group} of maximal palindromes.
It is known that all maximal palindromes in the same group share the same
preceding character in $T$, i.e., the character immediately to the left of the
palindrome.
Moreover, maximal palindromes belonging to different groups have distinct
preceding characters~\cite{NagashitaI23_PalFMIndex},
Thus, the following lemma holds.
\begin{lemma}\label{lem:num_groups}
For any position $i$, the number of groups of maximal palindromes
ending at position $i$ is bounded by $O(\min(\log i, \sigma))$.
\end{lemma}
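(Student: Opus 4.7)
The plan is to obtain the bound by combining two independent observations, one giving the $\log i$ term and the other giving the $\sigma$ term, and then simply taking the minimum.

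First I would invoke Lemma~\ref{lem:maximal_palindromes}(iii) directly: it states that $\mathbf{S}_i$ can be represented by $O(\log i)$ arithmetic progressions, and each such progression is precisely what we are calling a group. This immediately yields the $O(\log i)$ side of the bound, with no further work needed.

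Next I would establish the $O(\sigma)$ side using the property quoted from \cite{NagashitaI23_PalFMIndex} just above the lemma: all maximal palindromes in the same group share the same preceding character (the character immediately to the left of the palindrome in $T$, or a designated symbol when the palindrome is a prefix), and maximal palindromes in distinct groups have pairwise distinct preceding characters. This gives an injection from the set of groups ending at position $i$ into the alphabet $\Sigma$ (possibly extended by one sentinel for the prefix case, which only changes the bound by an additive constant), so the number of groups is at most $\sigma + O(1) = O(\sigma)$.

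Combining the two bounds, the number of groups ending at position $i$ is at most the minimum of the two, i.e.\ $O(\min(\log i, \sigma))$, which is what the lemma claims. There is no real obstacle here: both ingredients are already cited in the preceding text, and the proof is just the observation that both bounds hold simultaneously so one may take their minimum.
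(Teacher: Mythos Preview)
Your proposal is correct and mirrors exactly what the paper does: the $O(\log i)$ bound is taken directly from Lemma~\ref{lem:maximal_palindromes}(iii), and the $O(\sigma)$ bound follows from the cited fact that distinct groups have distinct preceding characters, so the number of groups injects into $\Sigma$ (plus a sentinel). The paper does not even give a separate proof environment for this lemma---it simply states the two ingredients in the preceding paragraph and concludes ``Thus, the following lemma holds,'' which is precisely your argument.
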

See also Fig.~\ref{fig:arithmetic_progressions} for a concrete example.

\begin{figure}[tb]
  \centerline{
    \includegraphics[width=1.0\linewidth]{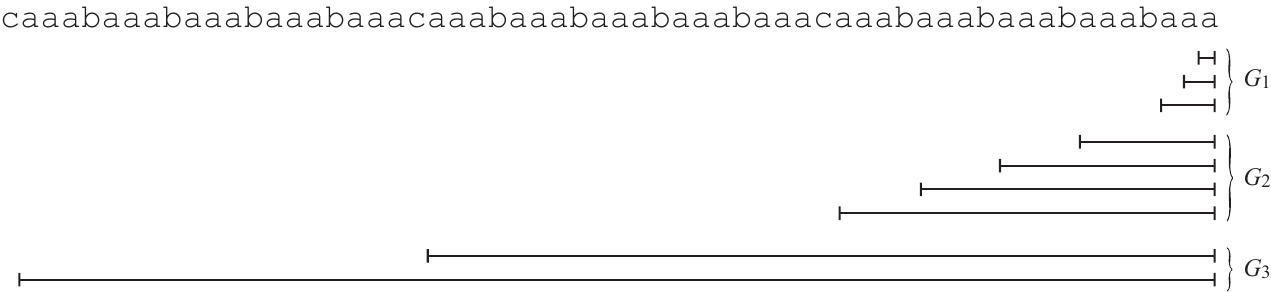}
  }
  \caption{Examples of arithmetic progressions representing
    the palindromic suffixes of a string.
    The first group $G_1$ is represented by $\langle 1, 1, 3 \rangle$,
    the second group $G_2$ by $\langle 7, 4, 4 \rangle$,
  and the third group $G_3$ by $\langle 39, 20, 2 \rangle$.}
  \label{fig:arithmetic_progressions}
\end{figure}

Since each arithmetic progression can be stored in $O(1)$ space,
and since there are only $O(\min(\log i, \sigma))$ arithmetic progressions
for each position $i$,
we can represent all maximal palindromes ending at position $i$
in $O(\min(\log i, \sigma))$ space.

For all $1 \leq i \leq n$ we can compute $\MaxPalE_T(i)$ in total $O(n)$ time:
After computing all maximal palindromes of $T$ in $O(n)$ time,
we can bucket sort all the maximal palindromes with their ending positions in $O(n)$ time.

Since palindromic suffixes are also maximal palindromes,
$\MaxPalE_T(n)$ is the set of palindromic suffixes of $T$, where $n = |T|$.
Thus Lemma~\ref{lem:maximal_palindromes} holds for palindromic suffixes in $T$.
This particular case of Lemma~\ref{lem:maximal_palindromes}
was shown in the literature~\cite{Apostolico1995parallel,MatsubaraIISNH09}.

\subsection{Maximal and Distinct Palindromes in a Trie}

A \emph{trie} $\mathcal{T} = (V, E)$ is a rooted tree whose edges are labeled by
characters from $\Sigma$ so that the out-going edges of every node carry
pairwise distinct labels.
Let $r$ be the root of $\mathcal{T}$.
When the path strings are read from the root toward the leaves we obtain the
\emph{forward} trie, and reading them in the opposite direction gives the
\emph{backward} trie.
Throughout this subsection we define notation with respect to the forward trie;
the backward version reuses the same notation after reversing the edge
directions.
Assuming the forward orientation, for any non-root node $u$ we write
$\parent(u)$ for the neighbor of $u$ that is closer to the root.
For any pair of nodes $u, v$ such that there is a directed path from $u$ to $v$
in the forward trie, we denote by $\str(u, v)$ the concatenation of edge labels
along that path.

Consider a forward trie $\mathcal{T}$ with $N$ edges.
For any palindromic path string $P = \str(u, v)$, we encode $P$ by the pair
$(|P|, v)$.
Since the reversed path from $v$ to $u$ is unique in a forward trie, the pair
$(|P|, v)$ allows us to reconstruct $P$ in $O(|P|)$ time.
A palindrome $P = \str(u, v)$ is called a \emph{maximal palindrome} in $\mathcal{T}$
if any one of the following conditions holds:
(1) $u \ne v$ and $u$ is the root,
(2) $u \ne v$ and $v$ is a leaf, or
(3) $\str(\parent(u), v')$ is not a palindrome with \emph{any} child $v'$ of $v$.

\begin{lemma} \label{lem:num_maximal_pal_trie}
  There are exactly $2N-L$ maximal palindromes in any trie $\mathcal{T}$
  with $N$ edges and $L$ leaves.
\end{lemma}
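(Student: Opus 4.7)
The plan is to set up a bijection between maximal palindromes in $\mathcal{T}$ and the disjoint union of all edges of $\mathcal{T}$ with all non-root non-leaf nodes. Since a tree with $N$ edges has $N+1$ nodes, the second piece has cardinality $(N+1) - 1 - L = N - L$, so the bijection immediately yields the total $N + (N - L) = 2N - L$. Concretely, I would assign to each palindromic path $\str(u,v)$ a \emph{center}, defined as the middle edge of the $u$-to-$v$ path when $|\str(u,v)|$ is odd and the middle node when it is even (with the center of $\str(u,u) = \varepsilon$ being $u$ itself). Thus odd-length palindromes are parameterized by edges and even-length palindromes by nodes.

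The key structural observation is that outgoing edges at every trie node carry pairwise distinct labels. This makes the symmetric outward extension of a palindrome around a fixed center deterministic: at each step the upper endpoint $u$ is forced to move to $\parent(u)$, while the lower endpoint $v$ must move to the unique child (if any) whose edge label matches that of $\parent(u)\to u$. For an edge $e$, starting from the length-$1$ palindrome labeling $e$ and iterating this extension, the process halts exactly when one of conditions (1), (2), (3) first triggers, and it produces the unique odd-length maximal palindrome centered at $e$. For a non-root non-leaf node $c$, starting instead from the empty palindrome $\str(c,c)$, either $c$ has a (necessarily unique) child $c'$ whose edge matches $\parent(c)\to c$---so the same extension procedure takes over from the resulting length-$2$ palindrome $\str(\parent(c), c')$---or condition (3) already holds at $c$ and $\str(c,c)$ is itself the unique even-length maximal palindrome centered at $c$.

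To finish the bijection I would argue the converse: every maximal palindrome is recovered from its center by the above process. For odd-length palindromes this is immediate from uniqueness of the middle edge. For even-length palindromes I must rule out root-centers and leaf-centers: a non-empty even palindrome requires positive extension on both sides of its center, which is impossible if the center is the root (no parent to extend into) or a leaf (no child to extend into), while the empty palindromes at the root and at a leaf fail the definition of maximality---conditions (1) and (2) require $u \ne v$, and condition (3) is only meaningful when both $\parent(u)$ and some child $v'$ of $v$ exist to be tested.

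The only delicate step is this last one, because a purely formal reading of condition (3) would admit $\str(\ell,\ell)$ at every leaf $\ell$ by vacuous quantification over its empty child set. I would resolve it by interpreting (3) as presupposing a genuine parent of $u$ and a genuine child of $v$ against which to test, in parallel with the string case, where the $2n-1$ maximal palindromes come from centers in $\{1, 1.5, \dots, n\}$ and boundary positions are excluded. Once that bookkeeping is handled the two counts combine to give exactly $N + (N-L) = 2N - L$ maximal palindromes, as claimed.
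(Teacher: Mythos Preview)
Your proposal is correct and takes essentially the same approach as the paper: split maximal palindromes by parity, biject odd-length ones with edges and even-length ones with non-root non-leaf nodes via their centers, and use the distinct-label property to argue that the outward extension around any fixed center is deterministic and hence yields a unique maximal palindrome. The paper's proof is terser and does not explicitly discuss the vacuous reading of condition~(3) at leaves, but your more careful treatment of that boundary case is a welcome clarification rather than a different argument.
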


\begin{proof}
  Let $r$ be the root of $\mathcal{T}$ and $u$ be any internal node of $\mathcal{T}$.
  Since the reversed path from $u$ to $r$ is unique
  and the out-going edges of $u$ are labeled by pairwise distinct characters,
  there is a unique longest palindrome of even length (or length zero) that is centered at $u$.
  Since there are $N+1$ nodes in $\mathcal{T}$, there are exactly
  $(N+1)-L-1 = N-L$ maximal palindromes of even length in $\mathcal{T}$.

  Let $e = (u, v)$ be any edge in $\mathcal{T}$.
  From the same argument as above,
  there is a unique longest palindrome of odd length that is centered at $e$.
  Thus there are exactly
  $N$ maximal palindromes of odd length in $\mathcal{T}$.
\end{proof}

For any trie $\mathcal{T}$, let the set of all palindromic substrings
in $\mathcal{T}$ be called \emph{distinct palindromes} in $\mathcal{T}$.
See Fig.~\ref{fig:palindromes_of_a_trie} for examples
of maximal palindromes and distinct palindromes in a trie.
\begin{figure}[tb]
  \centerline{
    \includegraphics[scale=0.6]{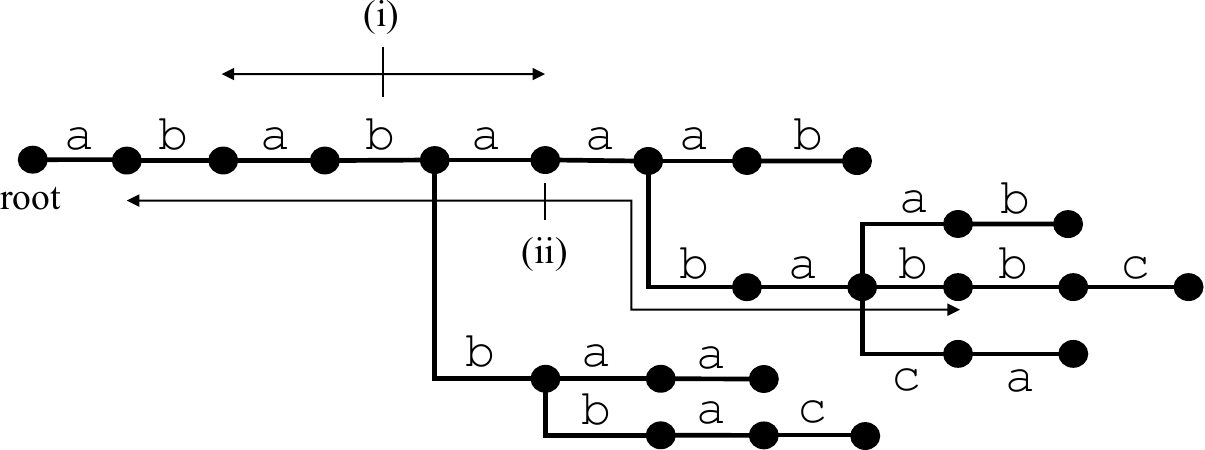}
  }
  \caption{
    The maximal palindrome centered at (i) is $\mathtt{aba}$ and
    the maximal palindrome centered at (ii) is $\mathtt{babaabab}$.
    The set of distinct palindromes in this trie is
    $\{ \mathtt{\varepsilon, a, b, c, aa, bb, aaa, aba, aca, bab}$, $\mathtt{bbb, abba, baab, aabaa, ababa, abbba, baaab, abaaba, baabaab, babaabab} \}$.
  }
  \label{fig:palindromes_of_a_trie}
\end{figure}

\begin{lemma} \label{lem:num_distinct_pal_trie}
  There are at most $N+1$ distinct palindromes in any trie $\mathcal{T}$
  with $N$ edges.
\end{lemma}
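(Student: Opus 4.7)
The plan is to generalize the classical Droubay--Justin--Pirillo bound for strings by induction on the number of edges $N$, exploiting the fact that appending one character to a string introduces at most one new distinct palindrome. The base case $N=0$ is immediate: the trie consists of just the root, whose only palindromic path string is $\varepsilon$, so the count is $1 = N+1$.

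For the inductive step, I would grow $\mathcal{T}$ one edge at a time in a top-down order (for instance, along a BFS or DFS enumeration from the root), so that every newly inserted edge $(u,v)$ turns $v$ into a fresh leaf. Let $T = \str(r,v)$ be the root-to-$v$ path string, where $r$ is the root. Every path string whose endpoint is not $v$ was already realized before the insertion, so the only substrings newly appearing in $\mathcal{T}$ are the non-empty suffixes of $T$; in particular, any palindromic substring that was absent before must be a palindromic suffix of $T$.

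The crux of the argument is to show that at most one palindromic suffix of $T$ is genuinely new. Let $Q$ denote the longest palindromic suffix of $T$; any other palindromic suffix $P$ of $T$ is strictly shorter and therefore a proper suffix of $Q$. Because $Q$ is a palindrome and $P = \rev{P}$, the factor $P$ is also a prefix of $Q$, so it occurs inside $T$ at a position ending strictly before $v$. Hence $P$ is already a substring of $\str(r,u)$, which was present in the trie prior to the insertion. Thus each of the $N$ insertions contributes at most one new distinct palindrome, and summing with the initial $\varepsilon$ gives at most $N+1$ in total.

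I expect the only delicate point to be this prefix/suffix symmetry step for $P$ and $Q$; once that is in hand, the induction itself is routine, and the freedom to grow the trie in a top-down order ensures that the ``new substrings are suffixes of one root-to-leaf path'' invariant is maintained at every stage.
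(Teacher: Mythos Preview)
Your proposal is correct and follows essentially the same approach as the paper: grow the trie one edge at a time in a top-down traversal, observe that only palindromic suffixes of the new root-to-leaf path can be new, and use the prefix/suffix symmetry of palindromes to show that every palindromic suffix shorter than the longest one already occurs in $\str(r,u)$. The paper's proof is identical in structure (it even cites the same Droubay--Justin--Pirillo argument you invoke), differing only in presentation.
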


\begin{proof}
  We follow the proof from~\cite{DBLP:journals/tcs/DroubayJP01}
  which shows that the number of distinct palindromes
  in a string of length $n$ is at most $n+1$.

  We consider a top-down traversal on $\mathcal{T}$.
  The proof works with any top-down traversal but for consistency with
  our algorithm to follow, let us consider a breadth first traversal.
  Let $r$ be the root of $\mathcal{T}$
  and let $\mathcal{T}_0$ be the trie consisting only of the root $r$.
  For each $1 \leq i \leq n$,
  let $e_i  = (u_i, v_i)$ denote the $i$th visited edge in the traversal,
  and let $\mathcal{T}_i$
  denote the subgraph of $\mathcal{T}$
  consisting of the already visited edges when we have just arrived at $e_i$.
  Since we have just added $e_i$ to $\mathcal{T}_{i-1}$,
  it suffices to consider only palindromic suffixes of $\str(r, v_i)$
  since every other palindrome in $\str(r, v_i)$ already
  appeared in $\mathcal{T}_{i-1}$.
  Moreover, only the longest palindromic suffix $S_i$ of $\str(r, v_i)$
  can be a new palindrome in $\mathcal{T}_i$ which does not exist in
  $\mathcal{T}_{i-1}$,
  since any shorter palindromic suffix $S'$ is a suffix of $S_i$
  and hence is a prefix of $S_i$,
  which appears in $\mathcal{T}_{i-1}$.
  Thus there can be at most $N+1$ distinct palindromes in $\mathcal{T}$
  (including the empty string).
\end{proof}

In the following sections,
we present our algorithms for computing maximal/distinct palindromes
from a given trie $\mathcal{T}$.
We denote by $N$ the number of edges in $\mathcal{T}$,
by $D \le N$ the number of distinct palindromes in $\mathcal{T}$,
by $h \le N$ the height of $\mathcal{T}$, and
by $L \le N$ the number of leaves in $\mathcal{T}$.

\subsection{Predecessor Data Structures}~\label{sec:colored_pred}
We first recall the standard predecessor problem.
Let $\mathcal{L}$ be an ordered list.
The predecessor query $\Pred(e)$ asks for the largest element $e' \in \mathcal{L}$
such that $e' < e$.
A predecessor data structure supports insertions, deletions,
and predecessor queries on $\mathcal{L}$.
We denote by $\timeP(x) \in O(\log x)$ the query time of a predecessor data structure
that supports all operations on a set of size $O(x)$ over a universe of size $O(x)$,
using linear space with respect to the size of the set.
We also denote by $\timePincr(x)$ the query time of the incremental variant
that supports only insertions and predecessor queries.
A comprehensive overview of existing predecessor data structures can be found in~\cite{NavarroR20_PredecessorSearch}.

In this paper, predecessor data structures are mainly used as dictionaries
for the outgoing edges of trie nodes, supporting access to a child node
given an edge label.
Since the outgoing edges of each trie node are labeled by characters from an alphabet of size $\sigma$,
the cost of dictionary operations on these edges is expressed as
$\timeP(\sigma)$ in the fully dynamic case and as $\timePincr(\sigma)$
when only insertions are supported.

In Section \ref{sec:online_distinct}, we use \emph{colored predecessor} data structures~\cite{Mortensen06FullyDynamicOnRAM,Mortensen03FullyDynamicTwoDimensional}.
Let $\mathcal{L}$ be an ordered list and $C$ be a set of \emph{colors}.
Each element $e \in \mathcal{L}$ is associated with a set of colors $C(e) \subseteq C$.
We define the following six operations and queries.
\begin{itemize}
\item $\Insert(e)$: add a new element immediately after $e$.
\item $\Delete(e)$: delete the element $e$ from $\mathcal{L}$.
\item $\Color(e, c)$: add a color $c$ to $C(e)$.
\item $\Uncolor(e, c)$: remove a color $c$ from $C(e)$.
\item $\Pred(e, c)$: find the closest element $e' \in \mathcal{L}$ such that $e' < e$ and $c \in C(e')$.
\item $\Succ(e, c)$: find the closest element $e' \in \mathcal{L}$ such that $e' > e$ and $c \in C(e')$.
\end{itemize}
A colored predecessor data structure is a data structure that supports these queries.
In the following, we consider the case where the set of colors $C$ is the alphabet $\Sigma$, i.e., each element $e \in \mathcal{L}$ is associated with a set of characters $C(e) \subseteq \Sigma$.
We also assume that the alphabet $\Sigma$ is ordered.

A natural approach to implementing the data structure for a general ordered alphabet is to maintain a predecessor data structure for each color.  
If balanced binary search trees are used as the predecessor data structures, and another binary search tree is maintained to locate the predecessor data structure associated with each character,  
the overall data structure supports queries in $O(\log m)$ time and requires $O(m)$ space in total.
In addition to this straightforward method, we employ the existing data structures for the colored predecessor problem.
\begin{lemma}\label{lem:colored_pred}
There exist colored predecessor data structures such that
\begin{enumerate}
\item supporting $\Insert(e)$, $\Color(e, c)$, $\Pred(e, c)$, and $\Succ(e, c)$ in $O(\log \log m)$ time when $\sigma \leq \log^{1/4} m$ and the alphabet is an integer range $[1,\sigma]$~\cite{Mortensen03FullyDynamicTwoDimensional,KucherovN17_FullFledged},
\item supporting all operations in $O\left(\frac{(\log \log m)^2}{\log \log \log m}\right)$ time when $\sigma \in O(n)$ and the alphabet is an integer range $[1,\sigma]$~\cite{Mortensen06FullyDynamicOnRAM}, and
\item supporting all operations in $O(\log m)$ time for a general ordered alphabet
\end{enumerate}
using $O(m)$ space, where $m$ is the maximum total number of colors associated with any element.
\end{lemma}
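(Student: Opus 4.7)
The plan is to handle the three claims separately. Claims~(1) and~(2) follow essentially by invoking the cited results of Mortensen~\cite{Mortensen03FullyDynamicTwoDimensional,Mortensen06FullyDynamicOnRAM} and Kucherov--Nekrich~\cite{KucherovN17_FullFledged}, which already provide colored predecessor structures matching the stated bounds. I would spell out how the setting of the present lemma maps to their setting---namely a 1D universe of integer keys with labeled elements, where $\Pred$/$\Succ$ restricts to elements carrying a specified color---and check that their operation sets align with $\Insert$, $\Delete$, $\Color$, $\Uncolor$, $\Pred$, $\Succ$. A mild technicality is that our $\Insert(e)$ is relative to an existing list element rather than keyed in $[1,\sigma]$, so I would layer an order-maintenance structure beneath the cited structure to translate list positions into stable integer keys on the fly.

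For claim~(3), I would build the data structure in three layers. The bottom layer is an order-maintenance structure over $\mathcal{L}$ supporting $\Insert$, $\Delete$, and $O(1)$-time order comparison of any two list positions. Above it, for every color $c$ that currently appears, we keep a balanced binary search tree $T_c$ whose keys are the list positions of the elements $e$ with $c \in C(e)$, ordered by the bottom-layer comparison operator. Finally, a balanced BST $\mathcal{D}$ indexed by color stores pointers to the roots of the per-color trees $T_c$. With this, $\Color(e,c)$ locates or creates $T_c$ via $\mathcal{D}$ and inserts $e$ into it; $\Uncolor(e,c)$ removes $e$ from $T_c$; and $\Pred(e,c)$, $\Succ(e,c)$ are standard predecessor/successor queries in $T_c$, each costing $O(\log\sigma + \log m) = O(\log m)$ time. $\Insert(e)$ is handled in the bottom layer alone, while $\Delete(e)$ must also remove $e$ from each $T_c$ with $c \in C(e)$; by storing at $e$ a list of pointers to its occurrences in the trees $T_c$, the cost of the deletion can be charged back to the $\Color$ operations that put it there.

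For the space bound, the order-maintenance layer uses $O(|\mathcal{L}|)$ space, the dictionary $\mathcal{D}$ holds at most $\sigma$ entries, and the combined size of the trees $T_c$ equals $\sum_{e\in\mathcal{L}}|C(e)|$; all three contributions are $O(m)$ under the lemma's convention. The crucial point that makes the construction correct is that $\Pred$ and $\Succ$ are defined by list order rather than by a fixed integer key, so the order-maintenance layer is exactly what allows each $T_c$ to stay correct under $\Insert$ and $\Delete$ without any reindexing.

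The main obstacle I expect is essentially bookkeeping rather than a conceptual hurdle: verifying that the black-box bounds of \cite{Mortensen03FullyDynamicTwoDimensional,Mortensen06FullyDynamicOnRAM,KucherovN17_FullFledged} transfer to the list-based formulation used here (since they are stated on integer universes and in the RAM model with the indicated constraints on $\sigma$), and confirming that the deletion charging scheme for claim~(3) yields the stated bound per operation rather than only an amortized one. If the cited structures provide only amortized guarantees under the required model assumptions, I would state this explicitly alongside the lemma.
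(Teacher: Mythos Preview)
Your proposal is correct and matches the paper's approach: the paper does not give a standalone proof of this lemma but instead (i) cites \cite{Mortensen03FullyDynamicTwoDimensional,KucherovN17_FullFledged,Mortensen06FullyDynamicOnRAM} inline for claims~(1) and~(2), and (ii) sketches, in the paragraph immediately preceding the lemma, exactly the per-color balanced BST construction you describe for claim~(3). Your write-up is in fact more careful than the paper's---you make explicit the order-maintenance layer needed because $\Insert(e)$ is list-relative, and you flag the amortization issue for $\Delete$, both of which the paper leaves implicit.
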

We denote by $\timeCP(x, \sigma)$ the query time of the linear-space colored predecessor data structure with $\sigma$ colors that supports all six operations, where the maximum total number of elements and colors associated with any element is $O(x)$, and by $\timeCPincr(x, \sigma)$ the query time of the incremental variant that supports only $\Insert$, $\Color$, $\Pred$, and $\Succ$ under the same conditions.
Since the colored predecessor problem generalizes the standard predecessor problem,
we have $\timeP(x) \in O(\timeCP(x, \sigma))$ and
$\timePincr(x) \in O(\timeCPincr(x, \sigma))$.
 \section{Online Computation of Maximal Palindromes in a Trie}\label{sec:online_maximal}
In this section, we present an online algorithm for computing maximal palindromes in a given trie $\mathcal{T}$.
Our algorithm runs in $O(N \min(\log h, \sigma))$ time and $O(N)$ space.
In this section, we treat the input trie as a \emph{forward trie}, i.e., we read path-labels in the root-to-leaf direction.
We note that the algorithm in this section works over a general \emph{unordered} alphabet.

The basic strategy of our algorithm is as follows:
For each node $u$,
we maintain the arithmetic progression
that represents the maximal palindromes ending at $u$
sorted in the increasing order of their lengths.
By Lemma~\ref{lem:num_maximal_pal_trie},
the number of arithmetic progressions to store is at most $2N-L$ in total.
Suppose that a node $u$ has a child $v$ with edge-label $c_v$
and that a new leaf $\ell$ with edge-label $a \ne c_v$ is added as a child of node $u$.
Notice that some of the maximal palindromes ending at $u$ could be extended by character $a$.
Since $a \ne c_v$,
those palindromes that are not extended with $a$ could still be extended with $c_v$.
This in turn means that
when another leaf $\ell'$ is added as a child of $u$ later,
then we can use the maximal palindromes that end at $u$
for finding the palindromes ending at the new leaf $\ell'$.
In the sequel, we will describe how to efficiently maintain these maximal palindromes
for each additional leaf of the trie.

Suppose that now we are to add a leaf $\ell$ with edge-label $a$ as a child of $u$.
The task here is to check if each palindromic suffix ending at $u$ can be extended with the character $a$.
We will process the groups of palindromic suffixes ending at $u$ in increasing order of their lengths.
Let $\langle s, d, t \rangle$ be the arithmetic progression
representing a given group of palindromic suffixes ending at $u$,
where $s$ is the length of the shortest palindromic suffix
in the group, $d$ is a common period of the palindromic suffixes
and $t$ is the number of palindromic suffixes in this group.
The cases where $t = 1$ and $t = 2$ are trivial,
so we consider the case where $t \geq 3$.
Let $P$ be a palindromic suffix in the group
that is not the longest one (i.e., $s \leq |P| \leq s + (t-2)d$).
Due to the periodicity (Claim~(iv) of Lemma~\ref{lem:maximal_palindromes}),
every $P$ is immediately preceded by a unique string $P[1..d]$ of length $d$.
Let $b = P[d]$ and $c$ be the character that immediately precedes
the longest palindromic suffix in the group.
There are four cases to consider:
\begin{enumerate}
  \item \label{case:1} $a = b$ and $a = c$ (namely $a = b = c$):
    In this case, all the palindromic suffixes in the group extend with $a$ and become
    palindromic suffixes of $\str(r, \ell)$.
    We update $s \leftarrow s+2$.
    The values of $d$ and $t$ stay unchanged.
  \item \label{case:2} $a = b$ and $a \neq c$.
    In this case, all the palindromic suffixes but the longest one in the group
    extend with $a$ and become palindromic suffixes of $\str(r, \ell)$.
    We update $s \leftarrow s + 2$ and $t \leftarrow t-1$.
    The value of $d$ stays unchanged.
\item \label{case:3} $a \neq b$ and $a = c$.
    In this case, only the longest palindromic suffixes in the group
    extends with $a$ and becomes a palindromic suffix of $\str(r, \ell)$.
    We first update $s \leftarrow s+(t-1)d+2$ and then $t \leftarrow 1$.
    The new value of $d$ is easily calculated from
    the length of the longest palindromic suffix in the previous group
    (recall the definition of $d$ just above Lemma~\ref{lem:maximal_palindromes}).
\item \label{case:4} $a \neq b$ and $a \neq c$.
    In this case, none of the members in the group extends with $a$.
    Then, we do nothing.
\end{enumerate}
In each of the above cases,
we store all these extended palindromes in $\ell$ as the set of maximal palindromes
ending at $\ell$ in $\str(r, \ell)$,
and exclude all these extended palindromes from the set of maximal palindromes
ending at $u$.
\begin{example}
  See Fig.~\ref{fig:arithmetic_progressions} for concrete examples of the above cases.
  Let $\alpha$ be the next character that is appended to the string
  in Fig.~\ref{fig:arithmetic_progressions}.
  Case~\ref{case:1} occurs to group $G_3$ when $\alpha = \mathtt{c}$.
  Case~\ref{case:2} occurs to group $G_1$ when $\alpha = \mathtt{a}$, and to group $G_2$ when $\alpha = \mathtt{b}$.
  Case~\ref{case:3} occurs to group $G_1$ when $\alpha = \mathtt{b}$, and to group $G_2$ when $\alpha = \mathtt{c}$.
  Case~\ref{case:4} occurs to all the groups when $\alpha = \mathtt{d}$.
\end{example}
In each of the above four cases,
we can check by at most two characters comparisons if the palindromes in a given group can be extended with $a$.
Since there are $O(\min(\log h, \sigma))$ arithmetic progressions
representing the palindromic suffixes ending at node $u$,
it takes $O(\min(\log h, \sigma))$ time to compute the palindromic suffixes ending at $\ell$.

There is one more issue remaining.
When only one or two members from a group extend with $a$,
then we may need to merge these palindromic suffixes
into a single arithmetic progression
with the palindromic suffixes from the previous group.
However, this can easily be done in a total of $O(\min(\log h, \sigma))$ time per leaf $\ell$,
since the palindromic suffixes ending at $u$ were given as $O(\min(\log h, \sigma))$ arithmetic progressions (groups).
\begin{example}
  See Fig.~\ref{fig:arithmetic_progressions} for a concrete example of the merging process.
  When $a = \mathtt{c}$,
  $\mathtt{c}$ is a palindromic suffix and forms a single arithmetic progression
  $\langle 1, 0, 1 \rangle$.
  All the palindromes in $G_1$ are not extended.
  The longest palindromic suffix in group $G_2$ is extended
  to $\mathtt{caaabaaabaaabaaabaac}$
  forming an arithmetic progression $\langle 21, 20, 1 \rangle$,
  where $20 = |\mathtt{caaabaaabaaabaaabaac}| - |\mathtt{c}|$,
  but all the other palindromic suffixes in group $G_2$ are not extended.
  Finally all the palindromic suffixes in group $G_3$ are extended
  and are represented by an arithmetic progression $\langle 41, 20, 2 \rangle$.
  Since the three palindromic suffixes of lengths $21$, $41$, and $61$
  share the common difference $20$,
  the two arithmetic progressions
  are merged into a single arithmetic progression $\langle 21, 20, 3 \rangle$.
\end{example}
We have shown the following:
\begin{theorem}\label{thm:MPalOnline}
  We can compute all maximal palindromes in a trie $\mathcal{T}$ in an online manner in $O(N \min(\log h, \sigma))$ time and $O(N)$ space.
\end{theorem}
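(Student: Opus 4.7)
The plan is to maintain, at every node $u$ of the growing trie $\mathcal{T}$, the set of palindromic suffixes of $\str(r,u)$ that have not yet been absorbed by a child, represented compactly as the $O(\min(\log h, \sigma))$ arithmetic progressions $\langle s, d, t\rangle$ guaranteed by Lemma~\ref{lem:num_groups}. When a new leaf $\ell$ is appended to $u$ through an edge labeled $a$, I would iterate over these groups at $u$ from shortest to longest, decide in constant time per group which of its palindromes extend on the left by $a$, move those to $\ell$, and leave the rest at $u$ as candidates for maximal palindromes ending there. For a group of size at least three, Lemma~\ref{lem:maximal_palindromes}(iv) guarantees that every non-longest palindrome is immediately preceded by the same character $b = P[d]$ and the longest one by a (distinct) character $c$; the two comparisons of $a$ against $b$ and $c$ then pick out one of the four cases in the excerpt, with constant-time updates to $(s,d,t)$. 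The two boundary cases $t\in\{1,2\}$ are handled by naive checks.

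After processing all groups at $u$, a single final pass merges any two adjacent progressions that happen to share the same common difference, with correctness following from Lemma~\ref{lem:maximal_palindromes}(i)--(ii). The time accounting is then immediate: at most $O(\min(\log h, \sigma))$ groups are touched per insertion, each with $O(1)$ work, and $N$ insertions give the claimed $O(N\min(\log h, \sigma))$ total. For space, I would argue that every palindrome stored during the algorithm lives in exactly one arithmetic progression at exactly one node, and every progression contains at least one palindrome; hence the number of progressions is bounded by the current number of maximal palindromes in $\mathcal{T}$, which is at most $2N-L=O(N)$ by Lemma~\ref{lem:num_maximal_pal_trie}.

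The step I expect to be the main obstacle is ensuring that each per-group comparison really costs $O(1)$, since literally reading $P[d]$ for a palindrome $P$ whose left end may lie far above $\ell$ in the trie would otherwise require a level-ancestor navigation. The fix is to cache the two preceding characters $(b,c)$ as extra fields of every stored progression and to recompute them whenever palindromes migrate: the characters needed at $\ell$ depend only on $a$, on the fields inherited from the corresponding group at $u$, and, in case~\ref{case:3} where a group collapses to a single element, on the second-longest palindrome of that group, which is available locally via the cached fields. Proving that these cached characters are updated consistently under all four cases and under the merging step, and that the $O(1)$-per-group bookkeeping therefore suffices, is the most delicate part of the correctness argument and is what I would work out carefully when writing the full proof.
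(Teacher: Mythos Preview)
Your proposal is essentially the paper's own argument: maintain the palindromic suffixes at each node as $O(\min(\log h,\sigma))$ arithmetic progressions, process the four cases per group when a leaf is added (moving the extended palindromes to the new leaf and leaving the others at the parent as genuine maximal palindromes), and merge adjacent progressions whose common differences coincide. The time and space accounting you give is exactly the paper's.

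The only point where you go further than the paper is the constant-time access to the preceding characters $b$ and $c$. The paper simply asserts that each group is handled ``by at most two character comparisons'' and does not say how those characters are read without random access to $\str(r,u)$. Your instinct to cache them with the progression is the right fix. One clean realization is to store, together with each group $\langle s,d,t\rangle$ at $u$, a pointer to the trie node where the shortest and the longest palindrome of the group begin; then $b$ and $c$ are just the incoming-edge labels of those two nodes. Observe moreover that the preceding character of the longest palindrome of group $G_i$ coincides with the preceding character of the shortest palindrome of group $G_{i+1}$ (both equal $\str(r,u)[|\str(r,u)|-d_{i+1}+1]$), so the two pointers per group are redundant across consecutive groups. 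Under each of the four cases the new starting nodes at $\ell$ are obtained from the old ones by a single $\parent(\cdot)$ step, and the merging step only concatenates adjacent groups, so no new pointers need to be manufactured from scratch; this discharges your ``main obstacle'' without level-ancestor machinery.
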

  When a leaf $\ell$ is deleted, we purge information on $\ell$ of size $O(\min(\log h, \sigma))$ and
  execute the reverse operation of the above merging process if needed.
  We obtain the next corollary:
  \begin{corollary}\label{col:dynamicMPal}
    We can maintain the set of maximal palindromes in a trie,
    which allows inserting or deleting a leaf in a single operation,
    in the worst-case $O(\min(\log h, \sigma))$ time per one operation.
  \end{corollary}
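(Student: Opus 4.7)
The plan is to reuse the insertion algorithm of Theorem~\ref{thm:MPalOnline} verbatim for the insert case, and to argue that every leaf insertion performs only $O(\min(\log h, \sigma))$ local edits to the arithmetic-progression lists of the trie, so a deletion can be realized by replaying those edits in reverse. Throughout, the only data that is ever touched when a leaf $\ell$ is inserted below $u$ with edge label $a$ are (i) the newly created list at $\ell$, whose size is $O(\min(\log h, \sigma))$ by Lemma~\ref{lem:num_groups}, and (ii) the groups of $u$ affected by the four cases discussed in Section~\ref{sec:online_maximal}, again bounded by $O(\min(\log h, \sigma))$. This locality is the key enabler.

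Concretely, I would augment each leaf $\ell$ with an \emph{undo record} of size $O(\min(\log h, \sigma))$, written during insertion. For every group $\langle s,d,t\rangle$ of $u$ that falls into one of Cases 1--4, the record stores its pre-insertion parameters together with a pointer to the group(s) of $\ell$'s list that received the extended palindromes. Whenever two consecutive groups were merged at $\ell$ (as in the running example, where the singleton $\langle 21,20,1\rangle$ and $\langle 41,20,2\rangle$ collapsed to $\langle 21,20,3\rangle$), the record additionally stores the split point, so that the merge can be inverted deterministically. Deleting $\ell$ then consists of freeing $\ell$'s list and walking the undo record, reinstating each affected group of $u$ in its original position of the progression list; the cost is proportional to the record's length, i.e., $O(\min(\log h, \sigma))$ worst-case time.

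Correctness follows because the undo record fully specifies the symmetric difference between $u$'s list immediately before and immediately after $\ell$'s insertion, and no other node's list is touched during that insertion (a palindrome ending at $u$ has a unique preceding character, so it migrates to at most one child). The additional space is $O(\min(\log h, \sigma))$ per leaf, hence $O(N \min(\log h, \sigma))$ overall, which is dominated by the space already used for the arithmetic-progression lists maintained by Theorem~\ref{thm:MPalOnline}; if one prefers the $O(N)$ bound stated in that theorem, the undo record can be reconstructed on demand from $\ell$'s list by stripping the two outer $a$'s from each arithmetic progression, at the cost of a slightly more involved case analysis.

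The main obstacle is the bookkeeping around Case~3 combined with merging: there, exactly one palindrome is extracted from an intermediate group of $u$, possibly changing both its first term and its length $t$, and the extracted palindrome may then coalesce with an adjacent progression at $\ell$. Undoing this requires simultaneously splitting a merged group of $\ell$ and restoring the original $\langle s,d,t\rangle$ at $u$ in the correct slot between its two neighbors. I expect the cleanest presentation is to treat the four cases symmetrically as pairs of local list-edit primitives (split/merge, extend-by-one/contract-by-one on $\langle s,d,t\rangle$) whose inverses are obvious, so that the deletion routine is literally the case analysis of Section~\ref{sec:online_maximal} executed backwards.
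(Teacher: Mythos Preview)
Your approach is essentially the paper's: free $\ell$'s list and reverse the merging at $u$. The observation that each palindrome at $u$ has a unique preceding character and hence migrates to at most one child is exactly the fact that makes arbitrary leaf deletion sound.

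There is, however, a subtle gap in your undo-record scheme as literally described. Storing the \emph{pre-insertion} parameters of each affected group of $u$ and ``reinstating each affected group of $u$ in its original position'' is correct only when $\ell$ is deleted before any other child of $u$ is inserted. Suppose a group $\langle s,d,t\rangle$ at $u$ falls into Case~2 when $\ell$ (edge label $a$) is inserted: the first $t-1$ members move to $\ell$ and the longest, of length $s+(t-1)d$ with preceding character $c\neq a$, remains at $u$ as a singleton. If a sibling $\ell'$ with edge label $c$ is subsequently inserted, that singleton migrates to $\ell'$. Deleting $\ell$ now and restoring the group to its stale pre-insertion parameters $\langle s,d,t\rangle$ would wrongly re-insert the longest member at $u$, even though it currently lives under $\ell'$. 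The corollary permits arbitrary interleavings of insertions and deletions, so this case must be handled.

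The fix is already in your proposal: the undo record must encode the \emph{symmetric difference} (what actually moved to $\ell$), not $u$'s pre-insertion state. Equivalently, and with no extra storage, strip the two outer $a$'s from each progression in $\ell$'s list to recover precisely the palindromes of $u$ with preceding character $a$; since distinct children of $u$ carry distinct labels, these sets are pairwise disjoint across siblings, and merging them back into $u$'s current list is correct regardless of intervening operations. This is what the paper's terse ``execute the reverse operation of the above merging process'' intends.
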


 \section{Online Computation of Distinct Palindromes in a Trie}\label{sec:online_distinct}
In this section, we present several online algorithms for computing all distinct palindromes in a given trie $\mathcal{T}$.  
Our algorithms work on a general ordered alphabet.

\subsection{EERTREE-based Online Algorithm}
This subsection presents an algorithm for enumerating all distinct palindromes in a trie
based on an \emph{EERTREE}.
In this subsection, we treat the input trie $\mathcal{T}$ as a \emph{forward trie}.

\subsubsection{EERTREE} \label{ssse:eertree}

The definition of the EERTREE of a trie
is a natural generalization of that of a string:
The nodes of the EERTREE $\EERTREE(\mathcal{T})$ of a trie $\mathcal{T}$
correspond to the distinct palindromes in $\mathcal{T}$.
We sometimes identify each node of $\EERTREE(\mathcal{T})$ with its corresponding palindrome.
There is a special node $\bot$ that represents an imaginary string of length $-1$.
All palindromes of length $1$ are children of $\bot$.  
For convenience, we define $c \cdot \bot \cdot c = c$.
Each node $u$ with $|u| \ge 2$ is a child of the node $v$ with $v = u[2.. |P|-1]$.
As a result, $\EERTREE(\mathcal{T})$ consists of two rooted trees
(1)       one is rooted at $\bot$        and contains all  odd-length palindromes in $\mathcal{T}$, and
(2) the other is rooted at $\varepsilon$ and contains all even-length palindromes in $\mathcal{T}$.
By definition, the size of $\EERTREE(\mathcal{T})$ is linear in the number of distinct palindromes in $\mathcal{T}$, which is at most $N+1$~(Lemma~\ref{lem:num_distinct_pal_trie}).
For any string $x$, let $\LPS(x)$ denote the longest palindromic suffix of $x$, possibly equal to $x$ itself.
We maintain a pointer from each trie node representing a string $x$ to the node $\LPS(x)$ in the EERTREE.

As in the original paper,
we introduce three types of links used to efficiently perform online construction.
For each node $x$ in $\EERTREE(\mathcal{T})$, the \emph{suffix link} $\slink(x)$ is a link to the node corresponding to the longest proper palindromic suffix of $x$ (i.e., one strictly shorter than $x$).
The \emph{quick link} $\qlink(x)$ is a link to the node corresponding to
the longest proper palindromic suffix of $x$ that is preceded by a character different from $b$,
where $b$ is the character preceding the suffix $\slink(x)$ of $x$.
For each node $x$ and each character $c \in \Sigma$, the \emph{direct link} $\dlink(x, c)$ is a link to the node corresponding to
the longest proper palindromic suffix of $x$ that is preceded by $c$

For convenience, we set $\slink(\bot) = \qlink(\bot) = \dlink(\bot, c) = \bot$.
We also set $\qlink(x)$ and $\dlink(x, c)$ to $\bot$ if such palindromes do not exist.

Since $\EERTREE(\mathcal{T})$ represents the distinct palindromes in $\mathcal{T}$,
an online construction of $\EERTREE(\mathcal{T})$ is equivalent to an online enumeration of distinct palindromes in $\mathcal{T}$.
Consider the case where a new leaf $v$ is added to the trie with edge label $c$ from its parent $u$.
Then, from the proof of Lemma~\ref{lem:num_distinct_pal_trie}, there is at most one new palindrome, namely $\LPS(\str(r, v))$.
Therefore, updating $\EERTREE(\mathcal{T})$ can be done as follows:
\begin{enumerate}
  \item compute $\LPS(\str(r, v))$;
  \item insert it into $\EERTREE(\mathcal{T})$ if it is a new palindrome; and
  \item update additional information of $\EERTREE(\mathcal{T})$ used in the algorithm.
\end{enumerate}

In the following, we discuss how to efficiently compute $\LPS(\str(r, v))$
and update the additional information.
We present three algorithms for this task, namely,
the \emph{basic}, \emph{quickLink}, and \emph{directLink} algorithms.
Each of these algorithms follows the same underlying ideas
as the corresponding algorithms proposed in the original paper.

\subsubsection{The \emph{basic} Algorithm} \label{ssse:basic}
The simplest way to compute $\LPS(\str(r, v))$ is to
enumerate all palindromic suffixes of $\str(r, v)$ in decreasing order of length,
check for each palindrome $x$ whether $cxc$ is a palindromic suffix of $\LPS(\str(r, v))$, and
select the first one that satisfies the condition.
This can be achieved by following the suffix links starting from $\LPS(\str(r, u))$.
By storing, for each trie node, the character that precedes its longest palindromic suffix,
each check can be performed in constant time.

For the single-string case, the total number of traversals over all $m$ insertions is $O(m)$.
However, in the trie setting, the amortized time analysis does not apply.
Consider the trie $\mathcal{T}$ that represents the set $\{ \mathtt{a}, \mathtt{aa}, \dots, \mathtt{a}^m \}$ and its corresponding EERTREE.
Suppose that we insert $m$ new nodes representing $\mathtt{a}^m\mathtt{b}, \mathtt{a}^m\mathtt{c}, \mathtt{a}^m\mathtt{d}, \dots$.
In this case, for each insertion, the algorithm must traverse suffix links starting from the node corresponding to $\mathtt{a}^m$ down to $\bot$.
Since each traversal takes $O(m)$ steps, the total cost of all insertions becomes $O(m^2)$.

\subsubsection{The \emph{quickLink} Algorithm} \label{ssse:quicklink}

Consider an EERTREE node $x$ during the procedure of the previous algorithm.
In this case,
if the preceding character of $\slink(x)$
and that of $\slink(\slink(x))$ are the same, we can skip checking
whether $c \cdot \slink(\slink(x)) \cdot c$ is a candidate.
Thus, replacing $\slink$ with $\qlink$ allows us to skip
such unnecessary checks without affecting the correctness of the algorithm.
Specifically,
we can compute $\LPS(\str(r, v))$
by following the quick links starting from $\LPS(\str(r, u))$,
checking for each palindrome $x$ whether $cxc$ is a palindromic suffix of $\LPS(\str(r, v))$,
and selecting the first one that satisfies the condition.
To enable checking whether the preceding character of $\qlink(x)$ is $c$,
for each trie node $x$, we store the character that precedes the suffix $\qlink(x)$ in $x$.
Since $\qlink$ skips over palindromic suffixes with the same preceding character,
by Lemma~\ref{lem:num_groups},
the number of traversals using $\qlink$ is bounded by
$O(\min(\log |\str(r, v)|, \sigma)) \subseteq O(\min(\log h, \sigma))$ time.
Thus, we can compute $\LPS(\str(r, v))$ in $O(\min(\log h, \sigma))$ time.

After inserting a new node, we need to add several links for the new node $x$ in $\EERTREE(\mathcal{T})$.
We maintain $\slink$ and $\qlink$ in this algorithm.
The suffix link $\slink(x)$ corresponds to the second-longest palindromic suffix of $\str(r, v)$.
Such a palindrome can be found by traversing $\qlink$ in the same way as when determining the new node,
in $O(\log h)$ time.
By the definition of $\qlink(x)$,
$\qlink(x)$ is the longest palindromic suffix $y$ of $x$
such that the preceding character of $y$ is different from that of $\slink(x)$.
It is either $\slink(\slink(x))$ or $\slink(\qlink(x))$,
which can be computed in constant time.

The total running time for determining $\LPS(\str(r, v))$
is $O(\min(\log h, \sigma))$, and inserting a new node and updating its links take
$O(\min(\log h, \sigma) + \timePincr(\sigma))$ time,
where the $O(\min(\log h, \sigma))$ term accounts for traversals using $\qlink$
and the $O(\timePincr(\sigma))$ term arises from edge insertions
using predecessor data structures.
Since the space complexity for storing the EERTREE and $\qlink$ is $O(N)$,
and the total number of insertion of new EERTREE nodes is $O(D)$,
we obtain the following result.
\begin{theorem}\label{thm:eertree_online_qlink}
  The EERTREE for a trie $\mathcal{T}$ can be constructed in an online manner in $O(N \min(\log h, \sigma) + D \timePincr(\sigma))$ time and $O(N)$ space.
\end{theorem}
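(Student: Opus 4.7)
The plan is to prove the three claims of the theorem (correctness, time bound, space bound) by assembling the components of the quickLink procedure already described above.

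First I would verify correctness of the LPS computation. When a leaf $v$ with edge label $c$ is added as a child of $u$, every palindromic suffix of $\str(r,v)$ of length at least $2$ has the form $cxc$ for some palindromic suffix $x$ of $\str(r,u)$. Moving from a palindromic suffix $y$ to $\slink(y)$ is safe to skip whenever the preceding character of $y$ equals that of $\slink(y)$, because in that case the test of whether $cxc$ is a palindrome gives the same answer for $x=y$ and $x=\slink(y)$. Hence traversing $\qlink$ from $\LPS(\str(r,u))$ and reporting the first hit yields $\LPS(\str(r,v))$, and the palindromic suffixes visited along the way have pairwise distinct preceding characters.

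Next I would bound the per-leaf cost. By Lemma~\ref{lem:num_groups}, the number of palindromic suffixes of $\str(r,v)$ with pairwise distinct preceding characters is $O(\min(\log|\str(r,v)|,\sigma)) = O(\min(\log h,\sigma))$, so the $\qlink$ traversal terminates within that many steps, each step taking $O(1)$ time thanks to the stored preceding-character annotation. Computing $\slink(x)$ for a newly inserted node uses a symmetric $\qlink$ traversal with the same bound, and $\qlink(x)$ is then one of $\slink(\slink(x))$ or $\slink(\qlink(x))$ selected by a character comparison in $O(1)$ time. The only super-constant operation that accompanies an insertion is the predecessor call used to install $x$ in its EERTREE parent's child dictionary, contributing $\timePincr(\sigma)$ time and incurred only when a new palindrome is actually created.

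Summing over all $N$ leaf insertions yields $O(N \min(\log h,\sigma))$ for traversals and link updates. Since at most $D$ new EERTREE nodes are ever created (and $D\le N+1$ by Lemma~\ref{lem:num_distinct_pal_trie}), the predecessor-insertion cost totals $O(D\timePincr(\sigma))$, producing the claimed running time. The space bound is straightforward: the EERTREE has $O(D) = O(N)$ nodes with $O(1)$ fields each ($\slink$, $\qlink$, preceding-character labels), the child dictionaries across all EERTREE nodes contain $O(D)$ entries in total, and each trie node stores one pointer to its $\LPS$ node.

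The main obstacle, in my view, is the per-leaf step bound. For a single string the same bound is proved by an amortized argument on suffix-link depth, but that argument breaks in the trie setting because a single EERTREE node can serve as the starting point for many distinct root-to-leaf paths (as illustrated by the $\mathtt{a}^m$ example used to refute the basic algorithm). The right move is to forgo amortization entirely and instead invoke Lemma~\ref{lem:num_groups} pointwise on each root-to-leaf path, which gives a worst-case $O(\min(\log h,\sigma))$ bound per insertion that is robust to the tree structure.
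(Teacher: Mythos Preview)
Your proposal is correct and follows essentially the same approach as the paper: you compute $\LPS(\str(r,v))$ by a $\qlink$ traversal from $\LPS(\str(r,u))$, bound the number of steps per leaf by Lemma~\ref{lem:num_groups} (pointwise, not amortized), compute $\slink$ of a new node by a second $\qlink$ traversal, set $\qlink$ in $O(1)$ time as one of $\slink(\slink(x))$ or $\slink(\qlink(x))$, and charge the $\timePincr(\sigma)$ edge-dictionary cost only to the $D$ actual node creations. Your explicit remark that the single-string amortization fails and must be replaced by the worst-case bound from Lemma~\ref{lem:num_groups} is exactly the point the paper is making in distinguishing the trie case from the string case.
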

\begin{corollary} \label{cor:eertree_online_qlink}
  We can compute all distinct palindromes in a trie $\mathcal{T}$ in an online manner in $O(N \min(\log h, \sigma) + D \timePincr(\sigma))$ time and $O(N)$ space.
\end{corollary}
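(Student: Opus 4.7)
The plan is to derive this corollary directly from Theorem~\ref{thm:eertree_online_qlink}. By the definition of $\EERTREE(\mathcal{T})$ given in Section~\ref{ssse:eertree}, the real nodes of $\EERTREE(\mathcal{T})$ (i.e., all nodes except $\bot$) are in one-to-one correspondence with the distinct palindromes of $\mathcal{T}$, including the empty string $\varepsilon$. Therefore, enumerating all distinct palindromes in $\mathcal{T}$ online is equivalent to reporting each EERTREE node at the moment it is first created during the online construction.

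First, I would run the online EERTREE construction of Theorem~\ref{thm:eertree_online_qlink}. By the proof of Lemma~\ref{lem:num_distinct_pal_trie}, when a new leaf $v$ with edge label $c$ is added as a child of $u$, the only candidate for a brand-new distinct palindrome is $\LPS(\str(r, v))$. The \emph{quickLink} procedure computes this palindrome and determines, via the child dictionary of the parent EERTREE node, whether the corresponding EERTREE node already exists; if it does not, the construction inserts a fresh node for it. At that moment, I would emit the newly created palindrome, using the encoding $(|\LPS(\str(r,v))|, v)$ from Section~2.2, which takes $O(1)$ time per output and permits reconstructing the palindrome itself in $O(|\LPS(\str(r,v))|)$ time on demand by walking the reverse path from $v$.

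The complexity bounds carry over without loss. Each leaf insertion incurs $O(\min(\log h, \sigma))$ time for the quick-link traversal and $O(\timePincr(\sigma))$ time for the predecessor operation used to look up or insert the child of the parent EERTREE node; the latter is only paid when a new EERTREE node is actually added, and by Lemma~\ref{lem:num_distinct_pal_trie} there are at most $D \le N+1$ such additions in total. Summing over the $N$ leaf insertions yields the stated $O(N \min(\log h, \sigma) + D \timePincr(\sigma))$ time bound, while the space is $O(N)$ since $\EERTREE(\mathcal{T})$ together with the $\slink$ and $\qlink$ pointers and the auxiliary preceding-character labels all fit in $O(D) \subseteq O(N)$ space, and the trie itself contributes $O(N)$.

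There is essentially no obstacle beyond observing the bijection between EERTREE nodes and distinct palindromes; the only subtlety worth flagging is making sure the enumeration is produced in $O(1)$ output work per new palindrome, which is precisely what the $(\text{length}, \text{endpoint})$ encoding guarantees so that the reporting does not inflate the construction bound of Theorem~\ref{thm:eertree_online_qlink}.
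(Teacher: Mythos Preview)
Your proposal is correct and follows essentially the same approach as the paper: the corollary is an immediate consequence of Theorem~\ref{thm:eertree_online_qlink} via the bijection between EERTREE nodes and distinct palindromes, which the paper states explicitly in Section~\ref{ssse:eertree} (``an online construction of $\EERTREE(\mathcal{T})$ is equivalent to an online enumeration of distinct palindromes in $\mathcal{T}$''). Your added details about emitting each palindrome in $O(1)$ time using the $(|P|,v)$ encoding are a reasonable elaboration but not strictly required, since the theorem already delivers the stated bounds and the reporting cost is dominated.
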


\subsubsection{The \emph{directLink} Algorithm} \label{ssse:directlink}

In the above algorithm,
traversals using $\slink$ or $\qlink$ are performed
when we find a node $x$ such that
the preceding character of $\slink(x)$ is $c$.
By using $\dlink$ with character $c$,
we can directly jump to such a node.
Specifically, the new palindrome is either of the form
$c \cdot \LPS(\str(r, u)) \cdot c$
or $c \cdot \slink(\dlink(\LPS(\str(r, u)), c)) \cdot c$.
In addition,
since the $\slink$ of the new node point to the second-longest palindromic suffix of $\str(r, v)$,
we can compute it by using $\dlink$ as well,
in the same way as when determining the new node.
An illustration of the procedure for inserting a new node is shown in Fig.~\ref{fig:eertree}.

\begin{figure}
  \hfill
  \begin{center}
  \begin{minipage}{0.45\textwidth}
    \centering
    \includegraphics[scale=0.7]{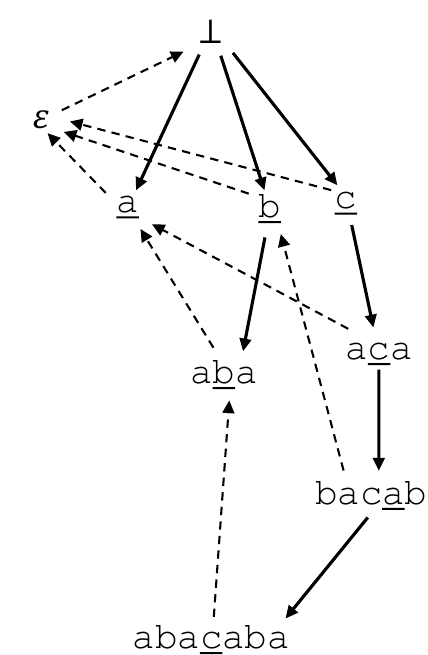}
  \end{minipage}
  \begin{minipage}{0.45\textwidth}
    \centering
    \includegraphics[scale=0.7]{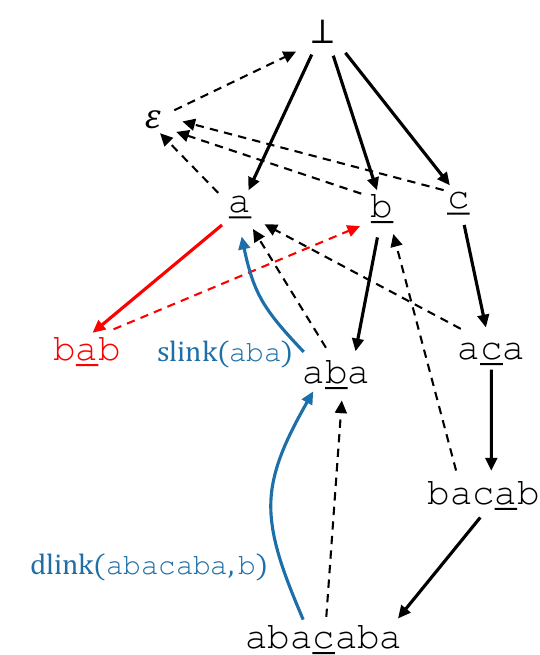}
  \end{minipage}
  \newline
  \end{center}
  \caption{
    An illustration of an EERTREE $\EERTREE(T)$ of a string $T = \mathtt{abacaba}$ (left)
    and the updated tree after 
    a new character $\mathtt{b}$ is appended to it, forming $\mathtt{abacabab}$ (right).
    The red nodes and edges represent newly created parts of $\EERTREE(T)$.
    Solid arrows indicate edges in the EERTREE, while dashed arrows represent suffix links.
    The process of locating the insertion point for the new node is shown by blue arrows.
    In this example, 
    the insertion point is determined as
    $\slink(\dlink(\LPS(\mathtt{abacaba}), \mathtt{b})) = \slink(\dlink(\mathtt{abacaba}, \mathtt{b})) = \slink(\mathtt{aba}) = \mathtt{a}$.
    Since the palindrome $\mathtt{bab}$ does not yet exist in $\EERTREE(T)$, a new node for it is inserted.
    Note that this procedure for a single string can also be applied to the trie setting,
    where appending a character corresponds to adding a new leaf.
  }
  \label{fig:eertree}
\end{figure}

The algorithm using direct links has an efficiency issue.
Since a single node may have multiple direct links, storing them in a straightforward manner results in significant time and space overhead.
Specifically, each node may have up to $\sigma$ direct links; thus, storing all direct links for each node requires $O(N \sigma)$ space.
To address this issue, we use two types of methods:
the \emph{Persistent Tree} method and
the \emph{Colored Ancestor} method.

\paragraph*{\bf The \emph{Persistent Tree} Method}
The original paper~\cite{EERTREE} employed a fully persistent binary search tree
(\emph{persistent tree} for short)~\cite{DriscollSST86_Persistent}.
A persistent tree stores an ordered sequence.
It allows
copying a tree in $O(\log k)$ time and space,
inserting an element in $O(\log k)$ time and space,
deleting an element in $O(\log k)$ time and space, and
accessing an element in $O(\log k)$ time and constant space,
where $k$ is the number of nodes in the tree.
We directly apply this technique to the EERTREE of a trie.

In the algorithm, the direct links associated with each node are represented by a persistent tree.
By Lemma~\ref{lem:num_groups}, the number of groups of palindromic suffixes
associated with a node is $O(\min(\log h, \sigma))$.
Thus, the number of direct links is bounded by $O(\min(\log h, \sigma))$
and storing these links in a persistent tree allows us to access $\dlink(x, c)$
in $O(\log \min(\log h, \sigma))$ time.

When a new node $x$ is added to the EERTREE,
we create a new persistent tree representing the direct links of $x$.
Since each direct link points to a node that can be reached by repeatedly following $\slink$ from $x$,
the direct links of node $x$ are almost the same as those of node $\slink(x)$.
Specifically, the value of $\dlink(x, a)$ can be computed as follows.
\begin{align*}
  \dlink(x, a) = \begin{cases}
    \slink(x)            & (\text{$x$ ends with $a \cdot \slink(x)$}) \\
    \dlink(\slink(x), a) & (\text{otherwise})
  \end{cases}
\end{align*}
Thus, the direct links of a node $x$ differ from those of $\slink(x)$ only for the character
preceding the suffix $\slink(x)$ in $x$.
Consequently, the persistent tree for $x$ can be obtained by copying that of $\slink(x)$
and updating the corresponding entry to point to $\slink(x)$,
which requires only a constant number of insertions and deletions.
This operation takes $O(\log \min(\log h, \sigma))$ time and space.

For each trie node insertion,
we spend $O(\log \min(\log h, \sigma))$ time accessing direct links
and determining whether a new palindrome should be inserted.
If a new palindrome is created,
we take $O(\log \min(\log h, \sigma) + \timePincr(\sigma))$ time
and $O(\log \min(\log h, \sigma))$ space
to insert it and update its direct links.
Therefore, we obtain the following result.
\begin{theorem}\label{thm:eertree_online_dlink_persistent}
  The EERTREE for a trie $\mathcal{T}$ can be constructed in an online manner in 
  $O(N \log \min(\log h, \sigma) + D \timePincr(\sigma))$ 
  time and $O(N + D \log \min(\log h, \sigma))$ space.
\end{theorem}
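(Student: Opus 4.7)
The plan is to follow the \emph{directLink} strategy sketched above, with the set of direct links of every EERTREE node stored in a fully persistent binary search tree keyed by the preceding character. When a new leaf $v$ with edge-label $c$ is added as a child of $u$, the candidate new palindrome is read off from $c \cdot \dlink(\LPS(\str(r,u)), c) \cdot c$, and updating $\EERTREE(\mathcal{T})$ amounts to (i) one $\dlink$ lookup, (ii) possibly inserting a fresh node $x$ for the new palindrome, and (iii) computing $\slink(x)$ and building the persistent tree of direct links for $x$.

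First I would bound the size of every persistent tree. Since $\dlink(x, \cdot)$ records at most one link per distinct character that precedes some proper palindromic suffix of $x$, Lemma~\ref{lem:num_groups} bounds the number of non-trivial direct links of any EERTREE node by the number of groups of palindromic suffixes, which is $O(\min(\log h, \sigma))$. Consequently, each persistent tree has size $k = O(\min(\log h, \sigma))$, and a single lookup, insertion, deletion, or copy costs $O(\log k) = O(\log \min(\log h, \sigma))$ time (and the last three cost that much space as well), by the fully persistent search tree of Driscoll et al.\ cited in the preceding text.

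Next I would analyze the per-leaf cost. Determining the new palindrome at $v$ requires $O(1)$ $\dlink$ lookups starting from $\LPS(\str(r,u))$, at a cost of $O(\log \min(\log h, \sigma))$ each. If no new palindrome is produced, no further work is needed; otherwise we create a new EERTREE node $x$ and must produce its persistent tree of direct links. Here I would exploit the recursion
\[
  \dlink(x, a) =
  \begin{cases}
    \slink(x) & \text{if $a$ is the character preceding $\slink(x)$ in $x$,}\\
    \dlink(\slink(x), a) & \text{otherwise,}
  \end{cases}
\]
which shows that the persistent tree for $x$ agrees with the one for $\slink(x)$ except at a single key. Hence it is obtained by one copy of $\slink(x)$'s tree and $O(1)$ updates, totaling $O(\log \min(\log h, \sigma))$ time and space. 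The suffix link $\slink(x)$ itself is located by an additional constant number of $\dlink$ queries. Finally, attaching $x$ as a child in the EERTREE requires inserting the new edge into the parent's children dictionary, which costs $\timePincr(\sigma)$ using an incremental predecessor data structure.

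Summing the costs, the $N$ leaf insertions contribute $O(N \log \min(\log h, \sigma))$ time, and the at most $D$ new-palindrome events contribute an extra $O(D(\log \min(\log h, \sigma) + \timePincr(\sigma)))$ time; absorbing the former term into the leaf-insertion bound (since $D \le N$) yields the stated $O(N \log \min(\log h, \sigma) + D \timePincr(\sigma))$. The EERTREE skeleton and children dictionaries occupy $O(N)$ space, while the persistent trees contribute $O(\log \min(\log h, \sigma))$ extra space per inserted palindrome, for a total of $O(N + D \log \min(\log h, \sigma))$. The only subtle point is verifying that the copy-and-update step is really $O(1)$ path-copying updates on the persistent tree; this reduces to the single-key-difference claim coming from the recursion above, which is exactly where the trie case inherits the analysis of the string case without any amortization.
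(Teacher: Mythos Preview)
Your proposal is correct and follows essentially the same approach as the paper: storing $\dlink(x,\cdot)$ in a fully persistent balanced search tree, bounding its size by $O(\min(\log h,\sigma))$ via Lemma~\ref{lem:num_groups}, and deriving the tree for a new node $x$ from that of $\slink(x)$ via a single-key update using the stated recursion. The only minor imprecision is your formula for the candidate palindrome (you should first test whether $\LPS(\str(r,u))$ itself is preceded by $c$, and the paper's version also inserts a $\slink$), but this does not affect the $O(1)$-$\dlink$-lookup cost and hence leaves the time and space analysis intact.
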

\begin{corollary} \label{cor:eertree_online_dlink_persistent}
  We can compute all distinct palindromes in a trie $\mathcal{T}$ in an online manner in 
  $O(N \log \min(\log h, \sigma) + D \timePincr(\sigma))$ 
  time and $O(N + D \log \min(\log h, \sigma))$ space.
\end{corollary}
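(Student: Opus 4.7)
The plan is to derive this corollary as a direct consequence of Theorem~\ref{thm:eertree_online_dlink_persistent}, by observing that online construction of $\EERTREE(\mathcal{T})$ is effectively the same task as online enumeration of distinct palindromes in $\mathcal{T}$. Recall from Section~\ref{ssse:eertree} that the nodes of $\EERTREE(\mathcal{T})$ are in bijection with the distinct palindromes of $\mathcal{T}$ (together with the imaginary node $\bot$), and by Lemma~\ref{lem:num_distinct_pal_trie} their number is at most $N+1$, so $D \leq N$.

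First, I would observe that when a new leaf $v$ with incoming edge label $c$ is added to $\mathcal{T}$, the proof of Lemma~\ref{lem:num_distinct_pal_trie} guarantees that the only candidate for a previously unseen palindrome is $\LPS(\str(r, v))$. The \emph{directLink} construction described in Section~\ref{ssse:directlink} locates this candidate in $O(\log \min(\log h, \sigma))$ time via a persistent-tree lookup of $\dlink(\LPS(\str(r, u)), c)$, tests whether the corresponding EERTREE node already exists, and, if not, creates a new node. Thus the set of \emph{new} EERTREE nodes produced during the entire online construction is precisely the set of distinct palindromes of $\mathcal{T}$, and each new palindrome can be reported at the moment its EERTREE node is created.

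Next, I would account for the time and space costs by summing over the two kinds of updates. For each of the $N$ leaf insertions, we spend $O(\log \min(\log h, \sigma))$ time on traversal of direct links to locate the candidate palindrome. For each of the at most $D$ new EERTREE nodes, we additionally copy/update a persistent tree at cost $O(\log \min(\log h, \sigma))$ time and space, and insert the new node into the parent's child dictionary at cost $O(\timePincr(\sigma))$. Summing these contributions reproduces exactly the bounds of Theorem~\ref{thm:eertree_online_dlink_persistent}, namely $O(N \log \min(\log h, \sigma) + D \timePincr(\sigma))$ time and $O(N + D \log \min(\log h, \sigma))$ space, and reporting the palindromes themselves adds no asymptotic overhead because each newly created node is tagged with the trie node where it first arose, from which its length and endpoint yield an $O(1)$-space encoding.

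There is no real obstacle here beyond bookkeeping: the non-trivial work has already been carried out in Theorem~\ref{thm:eertree_online_dlink_persistent}. The only point deserving a brief explicit remark is that reporting a new palindrome $P = \LPS(\str(r,v))$ the first time its EERTREE node is created is sufficient — subsequent leaves whose root-paths contain $P$ as a substring do not re-report $P$, matching the convention that each palindrome is counted once in the set of distinct palindromes.
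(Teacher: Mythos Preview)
Your proposal is correct and takes essentially the same approach as the paper: the corollary is stated immediately after Theorem~\ref{thm:eertree_online_dlink_persistent} without a separate proof, since the paper has already observed (at the end of Section~\ref{ssse:eertree}) that online construction of $\EERTREE(\mathcal{T})$ is equivalent to online enumeration of distinct palindromes. Your elaboration on the bookkeeping is more detailed than what the paper provides but is entirely consistent with it.
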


\paragraph*{\bf The \emph{Colored Ancestor} Method}
The \emph{Persistent Tree} method works well for the EERTREE for a trie, but it does not guarantee linear space usage.
In this work, we present an alternative method for implementing direct links in a more space-efficient way.

For an EERTREE $\EERTREE(\mathcal{T}) = \langle V, E \rangle$,
the suffix link tree of $\EERTREE(\mathcal{T})$
is a tree where the vertex set is $V$ and
the edge set is $\{ (x, \slink(x)) \mid x \in V \}$.
Since the path from a node $x$ to $\bot$ in the suffix link tree represents the sequence of all palindromic suffixes of $x$ in decreasing order of length,
the direct link $\dlink(x, c)$ points to the nearest ancestor $y$ of $x$ such that $c \cdot \slink(y)$ is a suffix of $y$.
Therefore, if we regard each node $x$ as being colored by the character that precedes $\slink(x)$,
the direct link $\dlink(x, c)$ corresponds to the \emph{nearest colored ancestor (NCA)}~\cite{GawrychowskiLMW18_NearestColoredNode} with color $c$ in the suffix link tree.

The NCA problem we address is in the \emph{incremental} and \emph{semi-dynamic} setting.
In our case, new nodes are inserted only as leaves due to the structure of the suffix link tree.
Moreover, once a node is created in the suffix link tree, its color remains unchanged.
Therefore, the only dynamic operations that our NCA data structure needs to support is the leaf insertion and NCA queries.
We summarize our semi-dynamic NCA problem as follows.

\begin{definition} \label{def:semi_dynamic_LCA}
The \emph{semi-dynamic nearest colored ancestor (NCA) problem} maintains a rooted tree whose nodes are colored from an alphabet $\Sigma$ and supports the following operations:
\begin{itemize}
  \item $\InsertLeaf(u, c)$: insert a new leaf as a child of $u$ and assign color $c$.
  \item $\DeleteLeaf(v)$: delete a leaf $v$ from the tree.
  \item $\NCA(v, c)$: return the nearest ancestor of $v$ that colored by $c$, or $\bot$ if no such node exists.
\end{itemize}
The problem is called the \emph{incremental semi-dynamic NCA problem} when the $\DeleteLeaf$ operation is not supported.
\end{definition}

We first describe the static NCA data structure proposed in~\cite{GawrychowskiLMW18_NearestColoredNode}.
The data structure maintains a static predecessor data structure for each color $c$, storing all nodes colored $c$ in Euler tour order.
An Euler tour of a tree with $k$ nodes is a sequence of nodes of length $2k$
obtained by performing a depth-first traversal of the tree and listing each node twice,
the first time it is visited and the last time before returning from it.
Thus, each node colored $c$ appears exactly twice in the predecessor data structure for color $c$.
Additionally, for each node $x$ marked with color $c$, it stores a pointer to the nearest colored ancestor of $x$ (excluding $x$ itself) with color $c$.
To find the NCA of a node $x$ with color $c$, we perform a predecessor query to locate the closest occurrence of a node colored $c$ that appears before the first occurrence of $x$ (i.e., its preorder visit) in the Euler tour order.
If this occurrence corresponds to the first appearance of some node $y$, then $y$ is the NCA of $x$ with respect to color $c$.
Otherwise, it corresponds to the second appearance (i.e., the postorder visit) of some node $y$.
In this case, the NCA of $x$ with color $c$ is the same as the NCA of $y$ with color $c$ excluding $y$ itself.
Since $y$ stores a pointer to such a node, we can compute the NCA of $x$ efficiently.

Then, we show how to update the data structure for supporting $\InsertLeaf$.
The static NCA data structure consists of two parts: 
predecessor data structure for each color, and 
pointers to the nearest colored ancestor.
For color-wise predecessor data structures,
we can simply replace them with an incremental colored predecessor data structure described in Lemma~\ref{lem:colored_pred}.
Since our semi-dynamic NCA problem does not allow changing the node color and inserting an internal node,
once a node is inserted, its nearest colored ancestor does not change.
Therefore, we do not have to update the pointers after new node is inserted.
Once new node $x$ is inserted with label $c$,
we compute $\NCA(x, c)$ and store the pointer to it.

We can also support $\DeleteLeaf$ queries.
To delete the element from the colored predecessor data structure,
we replace an incremental colored predecessor data structure with
the data structure that also supports deletions described in Lemma~\ref{lem:colored_pred}.
Furthermore, the pointers to the precomputed NCAs can be updated by simply removing those of the deleted node.
It is because the deleted node in $\EERTREE(\mathcal{T})$ is always a leaf and
there is no need to update other NCA pointers.  

Each update and query in the semi-dynamic NCA data structure can be reduced to a constant number of
updates and queries to the colored predecessor data structures defined over the Euler tour order.
Therefore, the time complexity of each operation is determined by that of the underlying colored
predecessor structure.
We recall that $\timeCPincr$ and $\timeCP$ denote the query times of
incremental and fully dynamic colored predecessor data structures,
respectively, as defined at the end of Section~\ref{sec:colored_pred}.
The space complexity remains linear in the number of nodes maintained.
This is because each node is assigned at most one color, and hence appears in at most one
colored predecessor data structure and stores only a single NCA pointer.
This leads to the following lemma.
\begin{lemma}\label{lem:semi_dynamic_nca}
  For a tree of size $m$,
  the incremental semi-dynamic nearest colored ancestor problem can be supported in
  $O(\timeCPincr(m, \sigma))$ time per operation using $O(m)$ space.
  We can also support delete leaf operations
  in $O(\timeCP(m, \sigma))$ time per operation using $O(m)$ space.
\end{lemma}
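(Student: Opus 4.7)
The plan is to take the static NCA data structure of Gawrychowski et al.\ described above and show that its two components can both be made incremental (and then fully dynamic) without losing linear space or worsening the query time beyond what the underlying colored predecessor structure charges.

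Recall that the static structure consists of (i) one predecessor data structure per color $c\in\Sigma$, storing the Euler-tour positions of nodes of color $c$, and (ii) for every colored node $y$, a pointer to the nearest proper ancestor of $y$ of the same color. First I would replace the per-color predecessor structures by a single colored predecessor data structure from Lemma~\ref{lem:colored_pred}, indexed by the linear order of the Euler tour; this gives the $\Insert(\cdot)$, $\Color(\cdot,\cdot)$, $\Pred(\cdot,\cdot)$, and $\Succ(\cdot,\cdot)$ primitives needed to answer an $\NCA(v,c)$ query exactly by the two-step reduction reviewed before the lemma (a single $\Pred$ on the preorder position of $v$, followed, if the returned occurrence is a postorder, by one dereference of a precomputed pointer). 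Thus a query costs $O(\timeCPincr(m,\sigma))$ in the incremental case and $O(\timeCP(m,\sigma))$ in the fully dynamic case.

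For $\InsertLeaf(u,c)$, the new leaf $v$ produces two new Euler-tour positions that must be inserted immediately before and immediately after the postorder position of $u$ in the list; this is supported by a constant number of $\Insert$ and $\Color$ operations on the colored predecessor. To fix the ancestor pointer of $v$, I would then answer one $\NCA(v,c)$ query on the just-updated structure and store the result at $v$; crucially, because $v$ is attached as a leaf and no existing node changes color, the ancestor-of-same-color pointer of every previously inserted colored node remains correct, so nothing else needs updating. For $\DeleteLeaf(v)$, I would simply remove $v$'s two Euler-tour positions from the colored predecessor (using $\Uncolor$ and $\Delete$) and discard its pointer; again, since $v$ is a leaf, no other stored pointer is invalidated.

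The main obstacle, and the reason the statement is phrased in terms of $\timeCPincr$ and $\timeCP$, is that the Euler-tour positions are not known in advance: they must be inserted at dynamically determined locations in the ordered universe used by the predecessor structure. This is exactly what the $\Insert(e)$ operation of the colored predecessor data structure in Lemma~\ref{lem:colored_pred} is designed to accommodate, so the reduction goes through. Correctness of the NCA answer follows from the Euler-tour characterisation reviewed above, and the space bound is $O(m)$ because each node contributes exactly two entries (one per Euler visit) with at most one color and at most one precomputed pointer.
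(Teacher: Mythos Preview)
Your proposal is correct and follows essentially the same approach as the paper: both dynamize the static Euler-tour-based NCA structure of Gawrychowski et al.\ by replacing the per-color predecessors with the colored predecessor structure of Lemma~\ref{lem:colored_pred}, and both rely on the key observation that leaf-only updates never invalidate the previously stored same-color-ancestor pointers. One small slip to fix: the two new Euler-tour entries for the inserted leaf $v$ must both lie \emph{between} the preorder and postorder positions of $u$ (e.g., both inserted consecutively immediately before $\mathrm{post}(u)$), not straddling $\mathrm{post}(u)$ as your wording suggests.
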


We now apply the above result to the suffix link tree of $\EERTREE(\mathcal{T})$ in order to implement direct links.
During the online construction of $\EERTREE(\mathcal{T})$, each insertion of a new trie node induces
only a constant number of operations on the semi-dynamic NCA data structure, corresponding to the
computation and maintenance of direct links.
In addition, we can insert a new node in $O(\timePincr(\sigma)) \subseteq O(\timeCPincr(D, \sigma))$ time using predecessor data structures.
Consequently, the time required for processing each trie node is
$O(\timeCPincr(D, \sigma))$, and the total space usage remains linear in $N$.
From this, we obtain the following theorem and corollary.

\begin{theorem} \label{thm:eertree_online_dlink_nca}
  The EERTREE for a trie $\mathcal{T}$ can be constructed in an online manner in $O(N (\timeCPincr(D, \sigma)))$ time and $O(N)$ space.
\end{theorem}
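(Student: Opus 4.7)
The plan is to instantiate the directLink algorithm of Section~\ref{ssse:directlink} with the incremental semi-dynamic NCA data structure provided by Lemma~\ref{lem:semi_dynamic_nca}, applied to the suffix link tree of $\EERTREE(\mathcal{T})$. The key observation I would exploit is that, as noted just before Definition~\ref{def:semi_dynamic_LCA}, if each node $x$ of the suffix link tree is colored by the character preceding $\slink(x)$ in $x$, then $\dlink(x, c) = \NCA(x, c)$. Hence direct links are not stored per node but evaluated on demand via NCA queries, avoiding the $O(N\sigma)$ blowup of naive storage.

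First, I would argue that each insertion of a new trie leaf $v$ with edge label $c$ triggers only $O(1)$ operations on the NCA data structure. Following the directLink procedure, the palindrome $\LPS(\str(r,v))$ is either $c \cdot \LPS(\str(r,u)) \cdot c$ or $c \cdot \slink(\dlink(\LPS(\str(r,u)), c)) \cdot c$, requiring one $\NCA$ query. If a new EERTREE node $x$ is created, computing its suffix link analogously requires one more $\NCA$ query, after which $x$ is attached to the suffix link tree with a single $\InsertLeaf$ call colored by the character preceding $\slink(x)$; no existing colors are altered. All other bookkeeping (maintaining the pointer from the trie node $v$ to $\LPS(\str(r,v))$, and inserting $x$ into its EERTREE parent's child dictionary, which takes $\timePincr(\sigma) \in O(\timeCPincr(D,\sigma))$ time) contributes only constantly many operations.

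Next, summing over all $N$ leaf insertions and using $D \le N+1$ from Lemma~\ref{lem:num_distinct_pal_trie}, the overall time is $O(N \cdot \timeCPincr(D, \sigma))$. For space, the EERTREE itself has $O(N)$ nodes each carrying $O(1)$ fields; the suffix link tree shares these nodes; the NCA data structure is linear by Lemma~\ref{lem:semi_dynamic_nca}; and the $N$ trie nodes store only $O(1)$ auxiliary data and their child dictionaries, totaling $O(N)$.

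The main subtlety — rather than a genuine obstacle — is confirming that the NCA data structure indeed fits the incremental semi-dynamic regime of Lemma~\ref{lem:semi_dynamic_nca}. I would verify that (i) every new EERTREE node is attached as a leaf of the suffix link tree, since $\slink(x)$ is strictly shorter than $x$ and existed before $x$'s insertion; (ii) colors of existing suffix link tree nodes never change, because the character preceding the longest proper palindromic suffix of a palindrome $y$ is determined once $y$ is created; and therefore (iii) the precomputed-NCA pointer that Lemma~\ref{lem:semi_dynamic_nca} freezes at insertion time remains correct forever. With these three points checked, the theorem follows immediately.
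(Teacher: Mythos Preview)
Your proposal is correct and follows essentially the same approach as the paper: instantiate the directLink procedure with the incremental semi-dynamic NCA structure of Lemma~\ref{lem:semi_dynamic_nca} on the suffix link tree, observe that each trie leaf insertion triggers $O(1)$ NCA operations plus one $O(\timePincr(\sigma))\subseteq O(\timeCPincr(D,\sigma))$ dictionary access, and conclude. Your explicit verification of points (i)--(iii) (that new EERTREE nodes enter the suffix link tree as leaves with fixed colors) is more detailed than the paper's brief justification, but it is exactly the reasoning the paper relies on implicitly when invoking the incremental semi-dynamic setting.
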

\begin{corollary} \label{cor:eertree_online_dlink_nca}
  We can compute all distinct palindromes in a trie $\mathcal{T}$ in an online manner in $O(N (\timeCPincr(D, \sigma)))$ time and $O(N)$ space.
\end{corollary}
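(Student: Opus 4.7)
The plan is to reduce the corollary directly to Theorem~\ref{thm:eertree_online_dlink_nca}. As observed at the start of Section~\ref{ssse:eertree}, the nodes of $\EERTREE(\mathcal{T})$ are in bijection with the distinct palindromes in $\mathcal{T}$, so an online construction of $\EERTREE(\mathcal{T})$ is equivalent to an online enumeration of the distinct palindromes. Thus once we invoke Theorem~\ref{thm:eertree_online_dlink_nca}, the stated time and space bounds transfer verbatim.

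The only thing to argue is that we can actually output (i.e., report) each newly created palindrome within the claimed budget. First I would recall from the proof of Lemma~\ref{lem:num_distinct_pal_trie} that when a leaf $v$ is added to $\mathcal{T}$ with parent $u$ and edge label $c$, the only candidate for a new distinct palindrome is $\LPS(\str(r,v))$, and the $\dlink$-based procedure of Section~\ref{ssse:directlink} identifies it and decides in $O(\timeCPincr(D,\sigma))$ time whether it already exists in $\EERTREE(\mathcal{T})$. If it is new, we create the corresponding EERTREE node and report the palindrome using its compact encoding $(|\LPS(\str(r,v))|, v)$ introduced in Section~\ref{sec:preliminaries}; this is $O(1)$ additional work per reported palindrome.

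Summing over all $N$ leaf insertions gives $O(N\cdot \timeCPincr(D,\sigma))$ total time, and the space is dominated by the EERTREE together with its suffix link tree, the semi-dynamic NCA structure of Lemma~\ref{lem:semi_dynamic_nca} on that tree, and the predecessor dictionaries used for trie-edge lookups, all of which are linear in $N$ by Theorem~\ref{thm:eertree_online_dlink_nca}. I do not foresee a real obstacle here since the corollary is essentially a restatement of the theorem; the only minor subtlety worth flagging explicitly is that the palindrome itself is not stored in the EERTREE node and must be represented via the encoding $(|P|,v)$, which is why reporting, as opposed to merely constructing the EERTREE, is still within the claimed bound.
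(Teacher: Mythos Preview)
Your proposal is correct and mirrors exactly how the paper treats this corollary: it is stated immediately after Theorem~\ref{thm:eertree_online_dlink_nca} without a separate proof, relying on the observation at the start of Section~\ref{ssse:eertree} that online construction of $\EERTREE(\mathcal{T})$ is equivalent to online enumeration of distinct palindromes. Your additional remark about reporting via the encoding $(|P|,v)$ is a harmless elaboration that the paper does not bother to spell out.
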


\subsubsection{Supporting Leaf Deletions} \label{ssse:leaf_deletion}
Finally, we show that the EERTREE of a trie can be modified to support leaf deletions.

When a leaf is removed from the trie, we use the following procedure to decide which node in $\EERTREE(\mathcal{T})$ should also be deleted:
Let $x$ be the string corresponding to the deleted leaf, and consider the node $\LPS(x)$ in $\EERTREE(\mathcal{T})$.  
If $\LPS(x)$ is a leaf in the EERTREE and has exactly one incoming link from $\mathcal{T}$, then it should be deleted.  
Otherwise, only the trie node is removed.

We now argue the correctness of this procedure.
If $\LPS(x)$ is not a leaf of the EERTREE, then $\LPS(x)$ occurs elsewhere, so it cannot be removed.
If there are multiple links to $\LPS(x)$, then there exists another trie node whose longest palindromic suffix is also $\LPS(x)$, and again it must be retained.
Finally, for the sake of contradiction, assume that
$\LPS(x)$ is a leaf, that there is exactly one link to this leaf from $\mathcal{T}$,
and that $\LPS(x)$ occurs in $\mathcal{T}$ other than as a suffix of $x$.
Let $y$ be the path string of the trie node with the smallest depth such that $\LPS(x)$ occurs as a suffix of its root-to-node path string.
Since there is only one link to $\LPS(x)$, the palindrome $\LPS(x)$ cannot be the longest palindromic suffix of $y$.
Thus, $\LPS(x)$ must occur as a proper prefix of the longest palindromic suffix of $y$,
which contradicts the minimality of the depth of $y$.

Once the EERTREE node to be deleted has been identified, we also update the additional data structures used to simulate links.
The deletion of $\slink$ and $\qlink$ is straightforward.
For $\dlink$, we use either a persistent tree or an NCA data structure.
In the case of the persistent-tree method,
when a node $x$ is deleted,
we delete the persistent tree representing the direct links of $x$.
This does not affect other persistent trees.
This is because each persistent tree of a node $y$ is copied from that of $\slink(y)$,
and the deleted node must be a leaf in the suffix-link tree.
In the case of the NCA-based method,
we can simply replace the incremental semi-dynamic NCA data structure
with the semi-dynamic NCA data structure described in Lemma~\ref{lem:semi_dynamic_nca}.
Therefore, by combining the above results with the leaf-deletion procedures described above,
we can support leaf deletions within the following time bounds.

\begin{corollary}\label{cor:dpal_deletion}
  The set of distinct palindromes in a trie $\mathcal{T}$,
  which supports both leaf insertions and leaf deletions,
  can be maintained online with the following time and space bounds:
  \begin{itemize}
    \item
    Using the Quick Link method
    (Corollary~\ref{cor:eertree_online_qlink}),
    each update can be supported in
    $O(\log \min(\log h, \sigma) + \timeP(\sigma))$ time and $O(N)$ space.
    \item
    Using the Direct Link with Persistent Tree method
    (Corollary~\ref{cor:eertree_online_dlink_persistent}),
    each update can be supported in
    $O(\log \min(\log h, \sigma) + \timeP(\sigma))$ 
    time and
    $O(N + D \log \min(\log h, \sigma))$ space.
    \item
    Using the Direct Link with Colored Ancestor method
    (Corollary~\ref{cor:eertree_online_dlink_nca}),
    each update can be supported in
    $O(\timeCP(D, \sigma))$ time and $O(N)$ space.
  \end{itemize}
\end{corollary}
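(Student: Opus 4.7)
The plan is to reduce every leaf deletion to a localized update of the EERTREE, and then plug in the dynamic version of the appropriate auxiliary structure for each of the three variants. At a high level I would (i) identify the unique EERTREE node that can be affected by deleting a given trie leaf, (ii) verify that removing it does not force any cascaded changes elsewhere, and (iii) bound the per-update cost of the corresponding dictionary and direct-link structure in each variant.

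For step (i), I would maintain at every EERTREE node a counter of the trie nodes whose $\LPS$ equals that palindrome. Each trie node already stores a pointer to its $\LPS$, so when a leaf $v$ with path string $x$ is deleted the counter at $\LPS(x)$ can be decremented in constant time. The preceding subsection already shows that the only EERTREE node that may have to be removed is $\LPS(x)$, and only when this counter drops to zero and $\LPS(x)$ is a leaf of the EERTREE.

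For step (ii), the central claim I would establish is that, under those two conditions, the EERTREE update is genuinely local: no other EERTREE node $y$ can have $\slink(y)$, $\qlink(y)$, or $\dlink(y,c)$ pointing to $\LPS(x)$. The substring-occurrence argument reprised in the text shows that $\LPS(x)$ occurs in the trie only as a suffix of $\str(r,v)$ ending at $v$. Any such $y$ would contain $\LPS(x)$ as a substring; then $y$ itself could occur in the trie only where $\LPS(x)$ does, namely as a palindromic suffix of $\str(r,v)$, and being a palindrome this forces $|y|\le |\LPS(x)|$ and hence $y=\LPS(x)$. This local-update property is, I expect, the main obstacle in writing the proof cleanly, since it is not directly implied by the deletion rule and relies on the same minimality-of-depth contradiction used earlier.

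For step (iii), the bounds then follow by attaching the appropriate dynamic structure to the insertion analysis already provided by Theorems~\ref{thm:eertree_online_qlink}, \ref{thm:eertree_online_dlink_persistent}, and \ref{thm:eertree_online_dlink_nca}. In the Quick Link variant, the $\slink$ and $\qlink$ pointers are stored per node, so discarding those of $\LPS(x)$ is $O(1)$, while removing the deleted trie edge from its parent's child dictionary contributes $\timeP(\sigma)$. In the Persistent Tree variant, the persistent tree representing the direct links of $\LPS(x)$ can simply be discarded: it was copied from the tree of $\slink(\LPS(x))$, and because $\LPS(x)$ is a leaf of the suffix-link tree no other persistent tree was derived from it, so no other node is affected. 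In the Colored Ancestor variant, I would substitute the incremental semi-dynamic NCA structure of Lemma~\ref{lem:semi_dynamic_nca} with its fully dynamic counterpart and invoke $\DeleteLeaf$ at cost $\timeCP(D,\sigma)$; everything else in the direct-link maintenance remains unchanged. Combining these costs with the existing insertion bounds yields the three worst-case per-update complexities stated in the corollary.
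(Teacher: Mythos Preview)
Your proposal is correct and follows essentially the same route as the paper: identify $\LPS(x)$ via the stored pointer, delete it from the EERTREE precisely when it is a leaf with a single incoming trie pointer (your counter is exactly the paper's ``number of incoming links from $\mathcal{T}$''), and then handle each of the three link variants exactly as the paper does---discard $\slink/\qlink$, discard the persistent tree, or switch to the deletion-supporting NCA structure of Lemma~\ref{lem:semi_dynamic_nca}. The one place you go beyond the paper is your step~(ii): the paper simply asserts that ``the deleted node must be a leaf in the suffix-link tree'' without justification, whereas you supply the occurrence argument showing that no other EERTREE node can have a $\slink$, $\qlink$, or $\dlink$ pointing into $\LPS(x)$ (note that your conclusion should read ``contradiction'' rather than ``$y=\LPS(x)$'', since $\LPS(x)$ is a \emph{proper} suffix of any such $y$).
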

 \subsection{Suffix-Tree-Based Online Algorithm}
We present an online algorithm for computing distinct palindromes using the suffix tree of a trie.
Only in this subsection, we treat the input trie as a \emph{backward trie},
i.e., we read path labels in the leaf-to-root direction.

The suffix tree $\mathsf{ST}$ of a backward trie $\mathcal{T} = (V, E)$ is
the compacted trie of all suffixes of the trie~\cite{Kosaraju89a}.
When reading path labels from leaves to the root, these suffixes are exactly the strings
$\str(v, r)$ for $v \in V$.
Throughout this subsection, we use bold symbols to denote nodes in $\mathsf{ST}$.
We assume that the root of $\mathcal{T}$ has a single outgoing edge labeled with a special character \$, which does not appear elsewhere in $\mathcal{T}$.
This assumption ensures that $\mathsf{ST}$ has a leaf corresponding to each node in $\mathcal{T}$.
For each node $\mathbf{u}$ in $\mathsf{ST}$, let $\str(\mathbf{u})$ denote the path string spelled out by the root-to-$\mathbf{u}$ path in $\mathsf{ST}$.
For each node $x$ in $\mathcal{T}$, we store a bidirectional pointer to a leaf $\mathbf{x}$ in $\mathsf{ST}$ such that $\str(\mathbf{x}) = \str(x, r)$.

To determine which operations are needed to detect new palindromes when a new leaf is inserted,
we observe from the proof of Lemma~\ref{lem:num_distinct_pal_trie} that only one new palindrome can appear,
and it must be the longest palindromic suffix of the string corresponding to the new leaf.
Thus, the following two operations for each new leaf suffice to achieve the online computation of distinct palindromes in a trie:
\begin{enumerate}
  \item Compute the longest palindromic suffix of the new leaf $v$ in the trie.
  \item Determine whether the palindrome is unique in the current trie and, if so, report it as a distinct palindrome.
\end{enumerate}
Since the longest palindromic suffix of the new leaf $v$ is always maximal,
we can compute such a palindrome in $O(\min(\log h, \sigma))$ time by Corollary~\ref{col:dynamicMPal}.
To determine the uniqueness of the palindrome, we use the suffix tree $\mathsf{ST}$ of the backward trie.
Let $\mathbf{v}$ be the node in $\mathsf{ST}$ corresponding to the new leaf $v$ in the input trie $\mathcal{T}$.
Then the parent of $\mathbf{v}$ in $\mathsf{ST}$ corresponds to the longest repeating suffix of $\str(r, v)$ in $\mathcal{T}$, where $r$ is the root.
Therefore, we can determine the uniqueness of the longest palindromic suffix $P$
by comparing its length $|P|$ with $|\str(\parent(\mathbf{v}))|$.
Specifically, $|\str(\parent(\mathbf{v}))| < |P|$ if and only if $P$ is unique in $\mathcal{T}$.
Since all these operations can be performed in $O(\min(\log h, \sigma))$ time,
maintaining the suffix tree of the backward trie yields an online algorithm for computing distinct palindromes with an additional $O(\min(\log h, \sigma))$ overhead per insertion.

In the following, we will show the following result.

\begin{restatable}{theorem}{SuffixTreeConstruction} \label{thm:strie}
  The suffix tree for a trie $\mathcal{T}$ can be constructed in an online manner in $O(N \timeCPincr(N, \sigma))$ time and $O(N)$ space.
\end{restatable}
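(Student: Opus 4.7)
My plan is to adapt Weiner's right-to-left suffix tree construction to the trie setting, maintaining $\mathsf{ST}$ incrementally as $\mathcal{T}$ grows one leaf at a time. Each trie node $v$ keeps a bidirectional pointer to its corresponding leaf $\mathbf{v}$ of $\mathsf{ST}$, and the outgoing edges at each $\mathsf{ST}$ node are stored in an incremental predecessor dictionary keyed by the first character of the edge label, so that descending by one character costs $O(\timePincr(\sigma)) \subseteq O(\timeCPincr(N,\sigma))$ time.

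When a new trie leaf $v$ is attached to $u$ by an edge labeled $c$, the new backward suffix is exactly $c \cdot \str(\mathbf{u})$, so a single new leaf plus at most one new internal node (from an edge split) has to be added to $\mathsf{ST}$. Following Weiner, I would maintain reverse suffix (Weiner) links $W_c(\mathbf{w})$ that point to the node labeled $c \cdot \str(\mathbf{w})$ whenever such a node already exists in $\mathsf{ST}$. The insertion locus is the deepest ancestor $\mathbf{w}^\star$ of $\mathbf{u}$ for which $W_c(\mathbf{w}^\star)$ is defined: from $W_c(\mathbf{w}^\star)$ the new leaf is attached after one outgoing-edge lookup and a possible edge split, and the suffix link of any newly created internal node is obtained by the same kind of query.

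To execute the ``deepest ancestor carrying color $c$'' step without walking link by link, I would represent the $W_c$ marks by a colored predecessor structure (incremental variant of Lemma~\ref{lem:colored_pred}) maintained over the Euler tour of $\mathsf{ST}$, coloring each node by the characters for which its Weiner link is defined. Since each non-root node of $\mathsf{ST}$ is the target of exactly one Weiner link (indexed by the first character of its label), the total number of color marks ever inserted is $O(|\mathsf{ST}|) = O(N)$, and each query or update costs $O(\timeCPincr(N,\sigma))$. Combined with the fact that each trie insertion triggers only $O(1)$ such operations plus $O(1)$ outgoing-edge updates, the total running time is $O(N\,\timeCPincr(N,\sigma))$, while the space stays $O(N)$ because $\mathsf{ST}$, its Euler tour, and the colored predecessor all have linear size.

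The main technical obstacle I expect is the correct handling of edge splits inside the colored predecessor: when a new internal node $\mathbf{x}$ is created on an edge $(\mathbf{a},\mathbf{b})$, I must insert $\mathbf{x}$ at two Euler-tour positions straddling the subtree of $\mathbf{b}$ and add a fresh Weiner-link mark at $\mathbf{x}$'s suffix-link target, while arguing inductively that no existing mark needs to migrate. A related subtlety is that the semi-dynamic NCA primitive of Lemma~\ref{lem:semi_dynamic_nca} was described only for leaf insertions, so I would need to verify that the underlying colored predecessor of Lemma~\ref{lem:colored_pred} supports these generic Euler-tour insertions at the same asymptotic cost. Once these pieces are in place, the correctness of the Weiner-style walk transfers essentially verbatim from the single-string case, since it depends only on the structural invariants of $\mathsf{ST}$ and the current contents of the colored predecessor, not on any amortization across a linear sequence of insertions.
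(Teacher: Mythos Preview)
Your approach is very close to the paper's: both adapt the Weiner-style right-to-left construction of Kucherov and Nekrich to backward tries, maintain Weiner links as character ``marks'' on $\mathsf{ST}$ nodes, and store those marks in a colored predecessor structure over the Euler tour of $\mathsf{ST}$. The paper, however, adds one ingredient you omit, namely a dynamic LCA data structure on $\mathsf{ST}$, and this is where your sketch has a gap.

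Knowing the deepest ancestor $\mathbf{w}^\star$ of $\mathbf{u}$ carrying a hard link $W_c$ is not sufficient to place the new leaf. After jumping to $W_c(\mathbf{w}^\star)$ and descending by one character onto an edge $(\mathbf{p},\mathbf{q})$, you still need the depth at which to split that edge, i.e., the length of the longest prefix of $c\cdot\str(\mathbf{u})$ already represented in $\mathsf{ST}$. Classical Weiner derives this from the soft-link indicators traversed on the upward walk, and that walk is exactly the amortized step that breaks in the trie setting. The paper avoids it by taking the colored predecessor $\mathbf{z}_1$ and successor $\mathbf{z}_2$ of $\mathbf{u}$ in the Euler tour (these are marked nodes anywhere in $\mathsf{ST}$, not necessarily ancestors of $\mathbf{u}$), computing $\lca(\mathbf{u},\mathbf{z}_1)$ and $\lca(\mathbf{u},\mathbf{z}_2)$, and observing that the deeper of the two is precisely the node $\mathbf{x}'$ with $\str(\mathbf{y}')=c\cdot\str(\mathbf{x}')$; the Weiner-link target of the winning $\mathbf{z}_i$ is then the lower endpoint of the edge to split. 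That this argument transfers from strings to backward tries is the content of Lemma~\ref{lem:isomorphic_subtrees_trie}, which the paper proves separately.

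The obstacle you flag yourself is also real: the semi-dynamic NCA of Lemma~\ref{lem:semi_dynamic_nca} only admits leaf insertions, whereas edge splits in $\mathsf{ST}$ create internal nodes. The paper sidesteps this by never invoking NCA at all; it works directly with colored predecessor/successor queries plus dynamic LCA, both of which accommodate internal insertions in the Euler tour within the stated bounds.
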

Therefore, we obtain the following theorem:
\begin{restatable}{corollary}{ComputingDistinctPals} \label{cor:dist_pal}
  We can compute all distinct palindromes in a trie $\mathcal{T}$ in an online manner in $O(N (\timeCPincr(N, \sigma) + \min(\log h, \sigma)))$ time and $O(N)$ space.
\end{restatable}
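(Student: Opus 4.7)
The plan is to run two dynamic structures in parallel as the trie $\mathcal{T}$ grows leaf-by-leaf: the suffix tree $\mathsf{ST}$ of the backward trie, maintained by Theorem~\ref{thm:strie}, and the arithmetic-progression encoding of maximal palindromes, maintained by Corollary~\ref{col:dynamicMPal}. The former supplies a novelty test for palindromes; the latter supplies the unique candidate. The logic of the combination has already been spelled out informally in the paragraphs preceding Theorem~\ref{thm:strie}, so my task is essentially to pin down the per-insertion cost and verify one orientation-conversion step.

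Concretely, when a new leaf $v$ is appended with parent $u$ and edge character $a$, I would first call Corollary~\ref{col:dynamicMPal}, which updates in $O(\min(\log h,\sigma))$ time the groups of maximal palindromes ending at $v$; the longest element, call it $P$, is the longest palindromic suffix of $\str(r,v)$ and can be read off in $O(1)$ additional time, together with its length $|P|$. By the proof of Lemma~\ref{lem:num_distinct_pal_trie}, $P$ is the only possible new distinct palindrome. I would then invoke Theorem~\ref{thm:strie} to insert the corresponding leaf $\mathbf{v}$ (with $\str(\mathbf{v})=\str(v,r)$) into $\mathsf{ST}$ at cost $O(\timeCPincr(N,\sigma))$, and follow the bidirectional pointer from $v$ to $\mathbf{v}$ to read the stored string depth of $\parent(\mathbf{v})$. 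Finally, I would report $P$ as a new distinct palindrome iff $|P|>|\str(\parent(\mathbf{v}))|$.

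For correctness I would observe that $\parent(\mathbf{v})$ represents the longest prefix of $\str(v,r)$ that coincides with a prefix of $\str(v',r)$ for some earlier trie node $v'\neq v$, i.e., the longest suffix of $\str(r,v)$ that already occurred as a suffix of some root-to-node string in $\mathcal{T}$. Because a palindrome equals its reversal, the forward occurrences of $P$ as a substring of $\mathcal{T}$ and its occurrences as a prefix of some $\str(v',r)$ coincide as sets, so comparing $|P|$ against the string depth of $\parent(\mathbf{v})$ is exactly the right novelty test. Summing over $N$ insertions gives $O(N(\timeCPincr(N,\sigma)+\min(\log h,\sigma)))$ total time, and the space bound is $O(N)$ since both underlying structures use $O(N)$ space.

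I expect the only real obstacle to be this orientation switch: Corollary~\ref{col:dynamicMPal} is phrased for the forward trie, while $\mathsf{ST}$ is built over the backward trie, and one must carefully align the bidirectional pointers so that the palindrome produced on the forward side is tested against the correct leaf on the backward side. Once that bookkeeping is fixed, the rest is a routine composition of the two cited results, and the leaf-deletion variant follows identically by substituting the deletion-supporting versions of both structures.
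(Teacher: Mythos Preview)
Your proposal is correct and follows essentially the same approach as the paper: maintain the maximal-palindrome structure of Corollary~\ref{col:dynamicMPal} to extract the longest palindromic suffix $P$ of the new root-to-leaf string, maintain the suffix tree of the backward trie via Theorem~\ref{thm:strie}, and declare $P$ new iff $|P|>|\str(\parent(\mathbf{v}))|$. One minor remark: your appeal to $P=\rev{P}$ in the novelty test is not actually needed, since for any suffix $P$ of $\str(r,v)$ the string $\rev{P}$ is a prefix of $\str(v,r)$, and the comparison is only between the lengths $|P|=|\rev{P}|$ and $|\str(\parent(\mathbf{v}))|$; the argument goes through for arbitrary suffixes, palindromic or not.
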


\paragraph*{\bf Online Construction of Suffix Trees for a Trie}\label{sec:strie}
Below, we give a proof of Theorem~\ref{thm:strie}.

We assume that, for each node $\mathbf{u}$ in $\mathsf{ST}$ and
each node $\mathbf{v}$ satisfying $\str(\mathbf{v}) = c \cdot \str(\mathbf{u})$ for some $c \in \Sigma$,
the node $\mathbf{u}$ stores a link labeled with $c$ to the node $\mathbf{v}$.
In this case, we say that $\mathbf{u}$ is marked by the character $c$.
A single node can be marked by multiple characters and may have multiple links labeled with distinct characters.
Since each link from $\mathbf{u}$ corresponds to a unique node $\mathbf{v}$ such that
$\str(\mathbf{v})[2..|\str(\mathbf{v})|] = \str(\mathbf{u})$,
the total number of both marks and links is at most $N$.

We focus on an existing right-to-left online construction algorithm for the suffix tree of a single string, proposed in~\cite{KucherovN17_FullFledged}.
The suffix tree of a single string $X$ is equivalent to the suffix tree of the trie consisting of only a single leaf representing $X$.
Hence, this setting is a special case of constructing a suffix tree for a backward trie online.
We first review the algorithm of \cite{KucherovN17_FullFledged} for a single string.
Then, we show that this algorithm can be adapted to construct a suffix tree for a trie.

Consider the suffix tree $\mathsf{ST}$ of a single string $X$, and suppose we prepend a new character $c$ to $X$, forming $Y = cX$.
We then need to insert a new leaf $\mathbf{y}$ corresponding to $Y$ into $\mathsf{ST}$, and if necessary, insert the parent node $\mathbf{y}'$ of $\mathbf{y}$.
Let $\mathbf{x}'$ be the node satisfying $\str(\mathbf{y}') = c \cdot \str(\mathbf{x}')$.
Note that $\mathbf{x}'$ is the parent of the node $\mathbf{x}$ corresponding to $X$,
and such a node $\mathbf{x}'$ always exists as a branching node in $\mathsf{ST}$ prior to the insertion of new nodes (see Fig.~\ref{fig:strie}).
A crucial step of the algorithm is to find the position in the $\mathsf{ST}$ where new nodes should be inserted.
This step is guided by the Euler tour of $\mathsf{ST}$.
Let $\mathbf{z}_1$ and $\mathbf{z}_2$ be the closest nodes preceding and following $\mathbf{x}$ in the Euler tour that are marked by the character $c$.
Then, $\mathbf{x}'$ is the lower of $\lca(\mathbf{x}, \mathbf{z}_1)$ and $\lca(\mathbf{x}, \mathbf{z}_2)$,
where $\lca(\mathbf{a}, \mathbf{b})$ is the lowest common ancestor of $\mathbf{a}$ and $\mathbf{b}$ in $\mathsf{ST}$.
Moreover, if $\lca(\mathbf{x}, \mathbf{z}_1)$ is the lowest, then the node $\mathbf{w}_1$ with $\str(\mathbf{w}) = c \cdot \str(\mathbf{z}_1)$
is either the parent of the new node $\mathbf{y}$ or the endpoint of the edge split by inserting the new node.
If $\lca(\mathbf{x}, \mathbf{z}_2)$ is the lowest, the node $\mathbf{w}_2$ with $\str(\mathbf{w}_2) = c \cdot \str(\mathbf{z}_2)$ has the same role~\cite[Lemma 5]{KucherovN17_FullFledged}.
Thanks to this property, we can compute the insertion point of the new edge by finding $\mathbf{z}_1$, $\mathbf{z}_2$, and their lowest common ancestor with $\mathbf{x}$.

This algorithm requires two main operations:
(1) finding the closest node in the Euler tour marked by a specific character, and
(2) computing the lowest common ancestor of two nodes in $\mathsf{ST}$.
The first operation can be reduced to a colored predecessor problem, described in Section~\ref{sec:colored_pred}.
We store all marks in $\mathsf{ST}$ in $\mathcal{L}$ sorted by the Euler tour order
and use the colored predecessor data structure to find the closest node with a specific character.
Since the total number of marks is at most $n$, the space complexity is $O(|X|)$.
For the second operation, we use a dynamic LCA data structure~\cite{ColeH05_dynamicLCA}
to answer queries in constant time.
This results in an overall update time of $O(\timeCPincr(|X|, \sigma))$.

The proof of the above property relies on the following fact:
for any pair of nodes $\mathbf{u}$ and $\mathbf{v}$ such that
$\str(\mathbf{v}) = c \cdot \str(\mathbf{u})$ for a character $c$, the subtree rooted at $\mathbf{v}$ is isomorphic to the graph obtained from the subtree rooted at $\mathbf{u}$ by contracting all nodes not marked by $c$.
We now justify that the same fact holds for the suffix tree of a backward trie.

\begin{lemma}\label{lem:isomorphic_subtrees_trie}
Let $\mathsf{ST}$ be the suffix tree of a backward trie $\mathcal{T}$.
For any character $c \in \Sigma$ and any pair of nodes $\mathbf{u}, \mathbf{v}$ in $\mathsf{ST}$ satisfying
$\str(\mathbf{v}) = c \cdot \str(\mathbf{u})$,
the subtree rooted at $\mathbf{v}$ is isomorphic to the graph obtained from the subtree rooted at $\mathbf{u}$ by
restricting it to the nodes marked by $c$ and contracting all unmarked nodes.
\end{lemma}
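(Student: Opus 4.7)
The plan is to exhibit an isomorphism $\phi$ from the subtree rooted at $\mathbf{v}$ to the subtree rooted at $\mathbf{u}$ restricted to $c$-marked nodes, defined by stripping the leading character: $\phi(\mathbf{v}') = \mathbf{u}'$ where $\str(\mathbf{u}') = \str(\mathbf{v}')[2..|\str(\mathbf{v}')|]$. Once this map is in hand, the whole job reduces to four verifications: (i) $\phi$ is well-defined, i.e., for every $\mathbf{v}'$ in the subtree of $\mathbf{v}$, the string $\str(\mathbf{v}')[2..]$ is indeed the label of some explicit node of $\mathsf{ST}$; (ii) its image $\phi(\mathbf{v}')$ lies in the subtree of $\mathbf{u}$ and is $c$-marked; (iii) $\phi$ is a bijection onto the $c$-marked nodes of the subtree of $\mathbf{u}$; and (iv) $\phi$ preserves the ancestor-descendant relation, so that contracting unmarked nodes in the image yields the subtree of $\mathbf{v}$.

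First I would handle well-definedness by case analysis on $\mathbf{v}'$. If $\mathbf{v}'$ is a leaf, its label equals $\str(v_1, r)$ for some trie node $v_1 \neq r$; since in a backward trie $\str(v_1, r)[2..] = \str(\parent(v_1), r)$, the stripped string is itself a suffix of $\mathcal{T}$ and therefore the label of a leaf of $\mathsf{ST}$. If $\mathbf{v}'$ is an internal (branching) node, then there exist two trie nodes $v_1, v_2$ whose suffixes $\str(v_1, r)$ and $\str(v_2, r)$ both begin with $\str(\mathbf{v}')$ but disagree on the very next character; the ``parent suffixes'' $\str(\parent(v_1), r)$ and $\str(\parent(v_2), r)$ then both begin with $\str(\mathbf{v}')[2..]$ and disagree at the same relative position, so $\str(\mathbf{v}')[2..]$ is a branching node of $\mathsf{ST}$ and hence explicit.

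For (ii) and (iii), the $c$-marking holds by definition, since the very existence of $\mathbf{v}'$ with $\str(\mathbf{v}') = c \cdot \str(\mathbf{u}')$ is the required witness; the prefix relation $cs \preceq \str(\mathbf{v}')$ immediately yields $s \preceq \str(\mathbf{u}')$, placing the image inside the subtree of $\mathbf{u}$. Injectivity is free because each string labels at most one node of $\mathsf{ST}$, and surjectivity follows from the defining property of the $c$-mark: every $c$-marked $\mathbf{u}'$ in the subtree of $\mathbf{u}$ comes with a companion $\mathbf{v}'$ with $\str(\mathbf{v}') = c \cdot \str(\mathbf{u}')$, which serves as the preimage. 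For (iv), the string-level equivalence $\str(\mathbf{v}_1') \preceq \str(\mathbf{v}_2') \iff \str(\mathbf{u}_1') \preceq \str(\mathbf{u}_2')$ transports the ancestor order, and contracting unmarked nodes converts ``skipping over unmarked intermediate ancestors'' in the subtree of $\mathbf{u}$ into the direct parent--child edges of the subtree of $\mathbf{v}$.

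The main obstacle is the internal-node case of well-definedness, since this is precisely where the analogous single-string proof tacitly uses the fact that ``shifting every occurrence by one position'' maps branching occurrences to branching occurrences. The key fact to lean on in the trie setting is the structural analogue of that shift: stripping the leading character of any suffix $\str(v, r)$ returns the suffix $\str(\parent(v), r)$, so moving from $v$ to $\parent(v)$ in $\mathcal{T}$ plays the role of a position-shift in a single string, and the remainder of the argument transplants without change.
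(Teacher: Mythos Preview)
Your proof is correct, but it takes a different route from the paper's. The paper argues purely at the level of \emph{leaves}: it sets up a bijection between $\Leaves(\mathbf{v})$ and the $c$-marked leaves in $\Leaves(\mathbf{u})$ using the same ``strip the first character / move one step toward the root'' observation you use, and then appeals once to the high-level fact that the subtree rooted at any suffix-tree node is the compacted trie of its leaf labels (with the common prefix removed). Since the two leaf sets coincide after stripping $c$, the compacted tries coincide, and the paper identifies the contracted $c$-marked subgraph with that compacted trie in one sentence.

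By contrast, you build the isomorphism directly on \emph{all} nodes, which forces the extra case analysis for internal (branching) nodes to show that $\str(\mathbf{v}')[2..]$ is again explicit. This is exactly the point the paper sidesteps by going through leaves and compacted tries. Your approach is more self-contained and makes the node-level bijection fully explicit; the paper's is shorter but leans on the reader to accept that ``restrict to $c$-marked nodes and contract'' yields the compacted trie of the $c$-marked leaves, a step it states rather than proves. Either way, the underlying trie-specific ingredient---that removing the leading character of $\str(v,r)$ yields $\str(v',r)$ for the node $v'$ one step closer to the root---is the same.
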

\begin{proof}
For a node $\mathbf{x}$ of $\mathsf{ST}$, let $\Leaves(\mathbf{x})$ denote the set of leaves in the subtree rooted at $\mathbf{x}$.
By construction of $\mathsf{ST}$, a leaf $\mathbf{y}$ belongs to $\Leaves(\mathbf{x})$ if and only if $\str(\mathbf{x})$ is a prefix of $\str(\mathbf{y})$.
Since each leaf $\mathbf{y}$ corresponds to exactly one trie node $y\in V$ with $\str(\mathbf{y})=\str(y,r)$, the set $\Leaves(\mathbf{x})$ corresponds to the set of trie nodes $y$ such that $\str(\mathbf{x})$ is a prefix of $\str(y,r)$.

Now fix $c\in\Sigma$ and nodes $\mathbf{u},\mathbf{v}$ with $\str(\mathbf{v})=c\cdot\str(\mathbf{u})$.
Consider any $\mathbf{x}\in\Leaves(\mathbf{v})$, and denote $\str(\mathbf{x})=\str(\mathbf{v})\cdot \alpha=c\cdot\str(\mathbf{u})\cdot \alpha$ for some string $\alpha$.
Let $x\in V$ be the trie node corresponding to $\mathbf{x}$.
Removing the first character of $\str(x,r)$ corresponds to moving one step toward the root in the backward trie.
Therefore, there exists a trie node $y$ such that $\str(y,r)=\str(\mathbf{u})\cdot\alpha$.
Let $\mathbf{y}$ be the leaf corresponding to $y$ in $\mathsf{ST}$.
Then $\mathbf{y}\in\Leaves(\mathbf{u})$ and $\str(\mathbf{x})=c\cdot\str(\mathbf{y})$, so $\mathbf{y}$ is marked by $c$.
This gives an injective mapping from $\Leaves(\mathbf{v})$ to $\Leaves(\mathbf{u})$.
The image of this mapping is exactly the set of leaves in the subtree rooted at $\mathbf{u}$ that are marked by $c$.
Let $\mathbf{y}'\in\Leaves(\mathbf{u})$ be a leaf marked by $c$.
By the definition of marking, there exists a unique leaf $\mathbf{x}'$ with $\str(\mathbf{x}')=c\cdot\str(\mathbf{y}')$.
Since $\str(\mathbf{u})$ is a prefix of $\str(\mathbf{y}')$, it follows that $\str(\mathbf{v})=c\cdot\str(\mathbf{u})$ is a prefix of $\str(\mathbf{x}')$.
Hence, $\mathbf{x}'\in\Leaves(\mathbf{v})$.

Consequently, the set of strings obtained from the leaves under $\mathbf{v}$ by deleting the common prefix $\str(\mathbf{v})$
is identical to the multiset obtained from the leaves under $\mathbf{u}$ that are marked by $c$ by deleting the common prefix $\str(\mathbf{u})$.
The subtree rooted at a node of the suffix tree is the compacted trie of these strings.
If we restrict the subtree rooted at $\mathbf{u}$ to the nodes marked by $c$ and then contract all unmarked nodes,
the resulting graph is exactly this compacted trie, with paths that do not lead to leaves marked by $c$ removed.
Therefore, the subtree rooted at $\mathbf{v}$ is isomorphic to the graph obtained from the subtree rooted at $\mathbf{u}$
by restricting it to the nodes marked by $c$ and contracting all unmarked nodes.
\end{proof}

By the above lemma,
we can use the same algorithm for the online construction of a suffix tree over a trie.
As mentioned earlier, the total number of marks is at most $N$.
Therefore, the space complexity of this algorithm is $O(N)$.
Using this approach, we achieve an update time of $O(\timeCPincr(N, \sigma))$ for the suffix tree of the trie in $O(N)$ space.
We note that both child traversal and insertion can also be supported within the same time bounds
using appropriate predecessor data structures~\cite{NavarroR20_PredecessorSearch}.
As a result, we obtain the following theorems.

\SuffixTreeConstruction*

\ComputingDistinctPals*

The above data structure can be extended to support leaf deletions.  
When a leaf node $y$ is deleted from the trie, the corresponding leaf $\mathbf{y}$ in the suffix tree $\mathsf{ST}$ must also be removed.  
If the parent $\mathbf{y}'$ of $\mathbf{y}$ becomes non-branching as a result, it should likewise be deleted.  
Since each leaf node in the trie maintains a pointer to its corresponding leaf in $\mathsf{ST}$, the location of $\mathbf{y}$ can be identified in constant time.  
Moreover, determining whether $\mathbf{y}'$ should be removed is straightforward.  
Therefore, by using a colored predecessor data structure that supports $\Delete$ and $\Uncolor$,
we can implement both insertions and deletions.
We summarize the result in the following.

\begin{theorem}\label{thm:strie_del}
  The suffix tree for a trie $\mathcal{T}$
  can be maintained online with support for leaf insertions and leaf deletions
  in $O(\timeCP(N,\sigma))$ time per update using $O(N)$ space.
\end{theorem}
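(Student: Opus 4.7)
The plan is to extend the online insertion procedure of Theorem~\ref{thm:strie} to also support leaf deletions, at cost $O(\timeCP(N,\sigma))$ per update instead of $O(\timeCPincr(N,\sigma))$. Concretely, I would replace the incremental colored predecessor structure over the Euler tour with the fully dynamic variant from Lemma~\ref{lem:colored_pred} (the one supporting $\Delete$ and $\Uncolor$), and replace the incremental dynamic LCA structure with a fully dynamic one. Insertions then proceed exactly as in the proof of Theorem~\ref{thm:strie}, since that procedure uses only $\Insert$, $\Color$, $\Pred$, $\Succ$, and LCA queries, each now costing $O(\timeCP(N,\sigma))$.

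\textbf{Deletion procedure.} To delete a trie leaf $y$, I would first follow the bidirectional pointer stored at $y$ to locate the corresponding leaf $\mathbf{y}$ of $\mathsf{ST}$ in constant time. Let $c$ be the first character of $\str(\mathbf{y})$ and let $\mathbf{u}$ be the explicit node with $\str(\mathbf{u})=\str(\mathbf{y})[2..|\str(\mathbf{y})|]$, which plays the role of $\mathbf{y}$'s suffix-link target. I would then (i) remove the two Euler-tour entries of $\mathbf{y}$ from $\mathcal{L}$ via $\Delete$, (ii) call $\Uncolor(\mathbf{u},c)$ to destroy the mark that pointed at $\mathbf{y}$, and (iii) delete $\mathbf{y}$ from both $\mathsf{ST}$ and the dynamic LCA structure. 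If, as a consequence, the parent $\mathbf{y}'$ of $\mathbf{y}$ has only one remaining child, I would contract $\mathbf{y}'$ analogously: remove its Euler-tour entries from $\mathcal{L}$, call $\Uncolor$ on its own suffix-link target for the appropriate character, splice its sole remaining child into its incoming edge, and drop $\mathbf{y}'$ from the LCA structure.

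\textbf{Analysis and main obstacle.} Because each leaf deletion removes at most two nodes of $\mathsf{ST}$, it triggers only $O(1)$ calls to $\Delete$, $\Uncolor$, and dynamic LCA operations, giving the per-update time bound $O(\timeCP(N,\sigma))$ directly from Lemma~\ref{lem:colored_pred}. The space bound $O(N)$ follows since both auxiliary structures use space linear in the number of stored elements, while the total number of marks is at most $N$. The step I expect to be the main obstacle is verifying that these local updates preserve the marking invariant, in particular that when $\mathbf{y}'$ is contracted no other explicit node of $\mathsf{ST}$ still has $\mathbf{y}'$ as its suffix-link target. For such a hypothetical explicit node $\mathbf{v}$ with $\str(\mathbf{v})=a\cdot\str(\mathbf{y}')$, Lemma~\ref{lem:isomorphic_subtrees_trie} would force the subtree rooted at $\mathbf{v}$ to mirror at least two distinct branches in the $a$-colored restriction of $\mathbf{y}'$'s subtree, contradicting the fact that $\mathbf{y}'$ has become non-branching after deleting $\mathbf{y}$. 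This rules out any residual dependencies on $\mathbf{y}'$, so only the constant number of marks identified above needs attention, and the remainder of the marking scheme stays consistent with the updated $\mathsf{ST}$. Combining with the insertion analysis gives the claimed bound.
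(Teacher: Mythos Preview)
Your proposal is correct and follows essentially the same approach as the paper: locate the suffix-tree leaf $\mathbf{y}$ via the stored pointer, delete it, contract its parent $\mathbf{y}'$ if it becomes non-branching, and replace the incremental colored predecessor structure with the fully dynamic one from Lemma~\ref{lem:colored_pred}. The paper's own justification is in fact considerably terser than yours; your explicit bookkeeping of Euler-tour entries, $\Uncolor$ calls, and the LCA structure, together with your invariant-preservation argument via Lemma~\ref{lem:isomorphic_subtrees_trie}, fills in details the paper leaves implicit.
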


\begin{corollary}\label{cor:dist_pal_del}
  The set of distinct palindromes in a trie $\mathcal{T}$,
  which supports both leaf insertions and leaf deletions,
  can be maintained online in 
  $O(\timeCP(N,\sigma) + \min(\log h, \sigma))$ time per update
  using $O(N)$ space.
\end{corollary}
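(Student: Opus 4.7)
My plan is to combine the dynamic suffix tree of Theorem~\ref{thm:strie_del} with the dynamic maximal-palindrome structure of Corollary~\ref{col:dynamicMPal}. Together these occupy $O(N)$ space and can process each leaf update in $O(\timeCP(N,\sigma) + \min(\log h, \sigma))$ time, matching the target bound. The insertion half is already spelled out at the start of this subsection: after refreshing the maximal-palindrome structure I read off $P = \LPS(\str(r,v))$ from it, update the suffix tree to obtain the new leaf $\mathbf{v}$, and report $P$ as a newly distinct palindrome iff $|\str(\parent(\mathbf{v}))| < |P|$. So the substance of the proof concerns the deletion case.

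For a deletion of leaf $v$, I would first query $P = \LPS(\str(r,v))$ from the maximal-palindrome structure before modifying anything. The key claim is that $P$ is the only palindrome whose membership in the distinct set can change as a result of this deletion. To justify this, I would observe that any other palindromic suffix $P'$ of $\str(r,v)$ is strictly shorter than $P$ and, by the palindromy of $P$, is also a prefix of $P$; hence $P'$ has a second occurrence in $\mathcal{T}$ that ends at a proper ancestor of $v$ and is therefore untouched by the removal. The only occurrence of $P$ that depends on $v$ is the one that ends at $v$; this survives iff the corresponding string is still a prefix of the path string of some leaf of $\mathsf{ST}$ other than $\mathbf{v}$. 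Since $\str(\parent(\mathbf{v}))$ is the longest repeating suffix of $\str(r,v)$, this in turn is equivalent to $|\str(\parent(\mathbf{v}))| \ge |P|$. So I apply exactly the same uniqueness test as upon insertion, but now to the pre-deletion suffix tree in order to decide whether to emit a removal of $P$.

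The workflow per deletion is then: (i) read $P$ and perform the uniqueness check against the pre-deletion $\mathsf{ST}$, emitting the removal of $P$ if $|\str(\parent(\mathbf{v}))| < |P|$; (ii) delete the leaf from the suffix tree via Theorem~\ref{thm:strie_del}; (iii) delete the leaf from the maximal-palindrome structure via Corollary~\ref{col:dynamicMPal}. Summing the three contributions gives the claimed per-update cost of $O(\timeCP(N,\sigma) + \min(\log h, \sigma))$, and the space remains $O(N)$.

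The main obstacle I expect is pinning down the claim that $P$ is the unique palindrome whose status can change. The single-string version of this fact is essentially immediate, but in a trie one must be careful to distinguish between a palindromic substring as an abstract string and its various occurrences as paths, since the shorter palindromic suffixes of $\str(r,v)$ could in principle coincide with palindromes realized elsewhere in $\mathcal{T}$ via unrelated paths. Once this combinatorial point is cleanly handled by the prefix-occurrence argument above, the rest of the proof is a straightforward composition of the two dynamic data structures, with no further amortization needed.
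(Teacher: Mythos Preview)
Your proposal is correct and follows exactly the approach the paper intends: combine the fully dynamic suffix tree of Theorem~\ref{thm:strie_del} with the maximal-palindrome structure of Corollary~\ref{col:dynamicMPal}, and reuse the uniqueness test $|\str(\parent(\mathbf{v}))| < |P|$ for both insertion and deletion. The paper itself states Corollary~\ref{cor:dist_pal_del} immediately after Theorem~\ref{thm:strie_del} without spelling out the palindrome-specific deletion argument, so your justification that only $\LPS(\str(r,v))$ can change status (because every shorter palindromic suffix reappears as a prefix of $P$ ending at a proper ancestor of $v$) is a welcome elaboration of what the paper leaves implicit.
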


\begin{figure}[h]
\begin{minipage}{0.5\textwidth}
    \centering
    \includegraphics[scale=0.4]{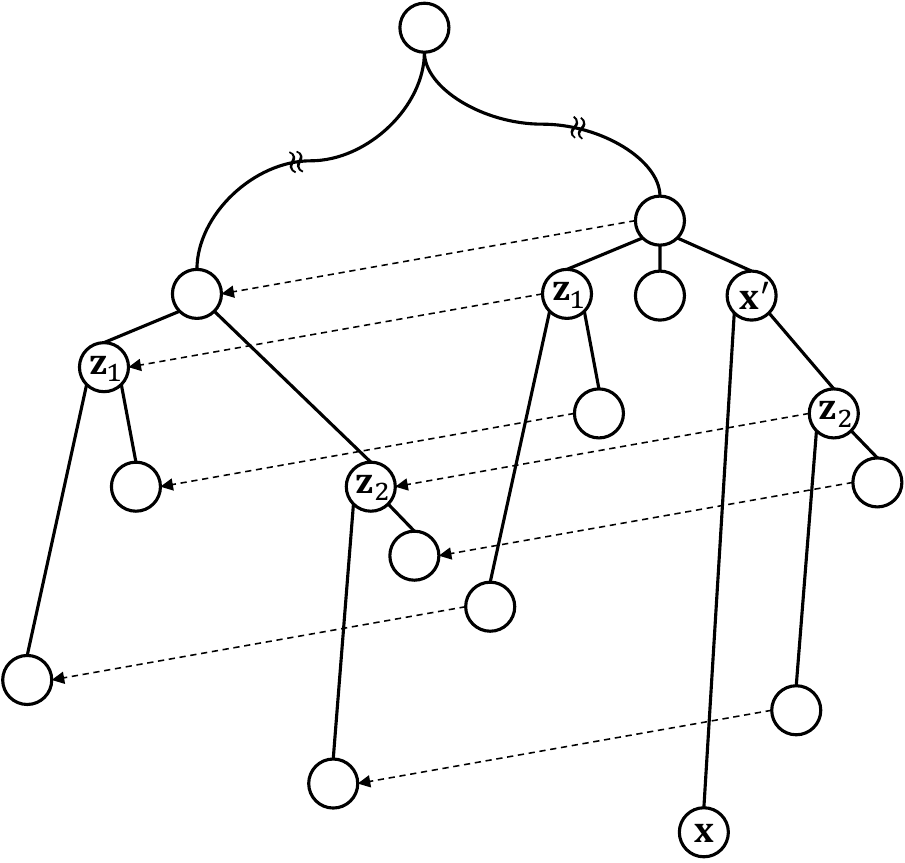}
  \end{minipage}
  \begin{minipage}{0.5\textwidth}
    \centering
    \includegraphics[scale=0.4]{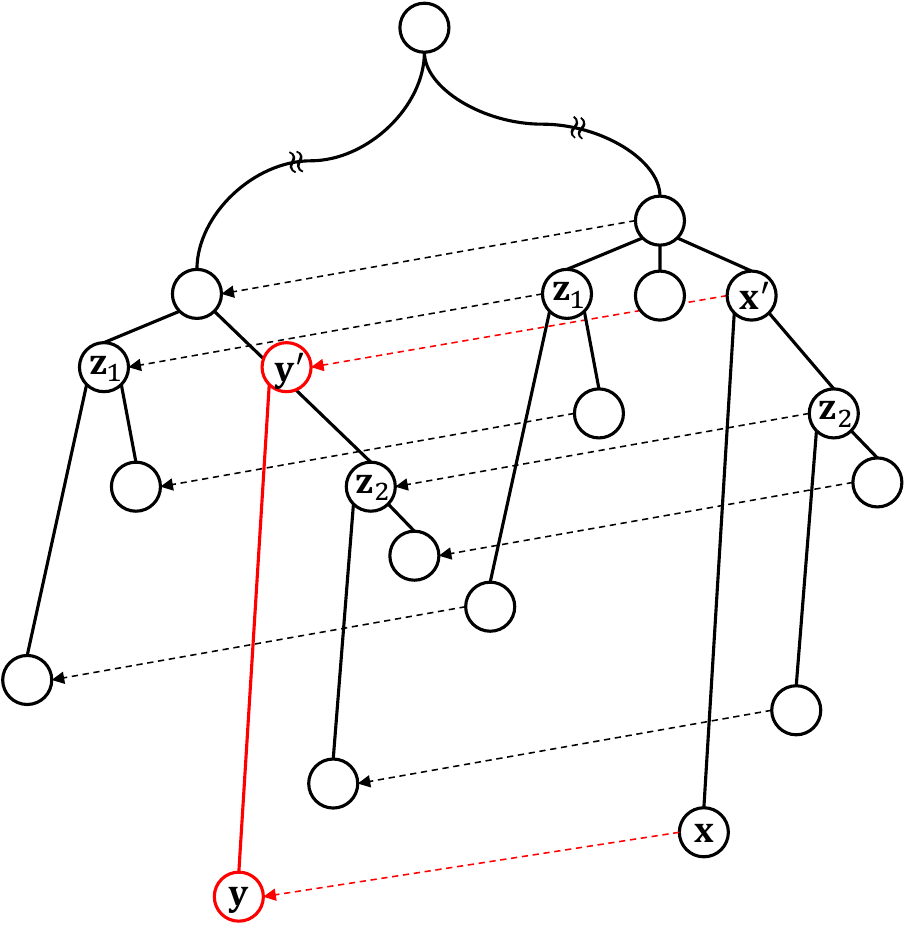}
  \end{minipage}
\caption{
    An illustration of the suffix tree $\mathsf{ST}$ of a single string before (left) and after (right)
    prepending a new character $c$ to $X$, forming $Y = cX$.
    Nodes and the edge marked in red have to be added in $\mathsf{ST}$.
    A dashed arrow represents the link labeled with $c$.
    Note that each starting node of a dashed arrow is marked by $c$.
    In this figure, $\mathbf{x}' = \lca(\mathbf{x}, \mathbf{z}_2)$ holds, and $\mathbf{w}_2$ is the endpoint of the edge split by inserting the new node $\mathbf{y}'$.
  }
  \label{fig:strie}
\end{figure}
 \section{Conclusions and Future Work}\label{sec:conclusions}

In this paper,
we studied the problem of computing palindromes in a trie
that supports leaf insertions and leaf deletions.
We considered both maximal palindromes and distinct palindromes.

For maximal palindromes,
we presented an algorithm that runs in $O(N \log h)$ time
and works over a general unordered alphabet.
The algorithm maintains all maximal palindromes efficiently
under leaf insertions and deletions in the trie.

For distinct palindromes,
we proposed several online algorithms based on two different frameworks:
one using the suffix tree of a backward trie,
and the other using the EERTREE of a forward trie.
These algorithms support leaf insertions and deletions
and maintain the set of distinct palindromes under these updates.
As a by-product,
we showed an efficient online algorithm for constructing
the suffix tree and the EERTREE of a trie, which is of independent interest.

As future work, the following problems will be interesting:
\begin{itemize}
  \item
    Can we extend our results on semi-dynamic trees to support root additions and edge-label substitutions?
    A na\"ive extension of Manacher's algorithm on the trie with $L$ leaves spends $O(L)$ time for each root addition
    since there are at most $L$ candidates for the palindromic suffixes in the backward trie
    and they cannot be represented compactly in the same way as for the palindromic prefixes.
  \item
    Can we efficiently compute palindromes in an edge-labeled DAG?
    An edge-labeled DAG is a natural generalization of a trie.
    Computing palindromes in a general DAG seems challenging.
    Perhaps a first step is to investigate properties of palindromes in
    some reasonable subclasses of edge-labeled DAGs.
  \item
    Can we construct the suffix tree of a trie in $O(N \log \sigma)$ time online?
    Our current approach relies on the colored predecessor data structure, which seems to limit further improvements in running time.
    Since our method is based on real-time construction of the suffix tree and no faster real-time online construction algorithm is known,
    achieving such a bound may require a fundamentally different approach.
\end{itemize}
 
\section*{Acknowledgements}
The authors thank the anonymous reviewers of earlier versions of this article
for their helpful comments.
This work was supported by JSPS Grant Numbers
JP24K20734 (TM), JP23K24808 (SI), JP24K02899 (HB).
\bibliographystyle{unsrt}
\bibliography{ref}
\end{document}